\newtheorem{theorem}{Theorem}[section]
\newtheorem{lemma}[theorem]{Lemma}
\newtheorem{proposition}[theorem]{Proposition}
\newtheorem{corollary}[theorem]{Corollary}
\theoremstyle{definition}
\theoremstyle{remark}
\newtheorem{remark}[theorem]{Remark}
\numberwithin{equation}{section}
\newcommand{\bfeta}{\boldsymbol{\eta}}
\begin{document}

\title{Mixed order phase transition in a locally constrained exclusion process}

\author[1]{\fnm{Gunter M.} \sur{Sch\"utz}}

\author*[2,3]{\fnm{Ali} \sur{Zahra}
}\email{ali.zahra@univ-lorraine.fr}

\affil*[1]{\orgdiv{Institute for Advanced Simulation 2}, 
  \orgname{Forschungszentrum J\"ulich}, 
  \orgaddress{\city{J\"ulich}, \postcode{52424}, \country{Germany}}}

\affil[2]{\orgname{Laboratoire de Physique et Chimie Th\'eoriques, Universit\'e de Lorraine},
  \orgaddress{\city{Nancy}, \country{France}}}

\affil[3]{\orgname{Instituto Superior T\'ecnico, Universidade de Lisboa},
  \orgaddress{\city{Lisbon}, \country{Portugal}}}
\date{\today}

\abstract{
We investigate a novel variant of the exclusion process in which particles perform asymmetric nearest-neighbor jumps across a bond \((k, k+1)\) only if the preceding site \((k-1)\) is unoccupied. This next-nearest-neighbor constraint significantly enriches the system’s dynamics, giving rise to long-range correlations and a mixed-order transition controlled by the asymmetry parameter.
We focus on the critical case of half filling, where the system splits into  two ergodic components, each associated with an invariant reversible measure. The  combinatorial structure of this equilibrium distribution is intimately connected to the \(q\)-Catalan numbers, enabling us to derive rigorously the asymptotic behavior of key thermodynamic quantities in the strongly asymmetric regime and to conjecture their behavior in the weakly asymmetric limit.Even though the system is one-dimensional and has short-range interactions, an equilibrium phase transition occurs between a clustered phase -- characterized by slow dynamics, long-range correlations with thermodynamic additivity, and spontaneous breaking of translational symmetry -- and a  fluid phase where the correlations are short-range and which is thermodynamically additive. This equilibrium phase transition features characteristics of a first-order transition, such as a discontinuous order parameter as well as characteristics of a second-order transition, namely a divergent susceptibility at the transition point. We also briefly discuss density higher than one half where ergodicity is broken.
}

\maketitle

\section{Introduction}
Kinetically constrained lattice gases (KCLG) are interacting particle systems
which show some of the key features of ergodicity breaking transitions and glassy dynamics \cite{helbing2000simulating}
In such systems, particles can hop on a lattice only if certain local conditions are fulfilled. Such constraints can generate large-scale effects, including spatial heterogeneities, metastability, aging, ergodicity breaking, and other glass-like behavior \cite{ritort2003glassy,garrahan2011kinetically}.

An early influential example is the  Kob-Andersen (KA) model on a three-dimensional cubic lattice \cite{kob1993kinetic} with exclusion interaction, i.e., allowing for at most one particle per site. 
A particle may hop to a neighboring vacant site only if {\it both} the departure and the arrival sites possess at least \(m\) empty nearest neighbors  before and  after the move (with the canonical choice \(m=3\)). 
This local “cage” constraint reproduces the cooperative crowding observed in dense liquids: As the global density increases, more particles become blocked by their neighbors and the structural relaxation time grows sharply. 
Thus the model captures the paradigmatic glassy scenario in which the configuration remains disordered, like a liquid, yet the dynamics are arrested, like a solid.

More generally, multi–site kinetic constraints arise in many physical contexts in one or more dimensions. Well-known examples include cooperative/extended-particle processes such as \(k\)-mer random sequential adsorption \cite{evans1993random}; porous media models, where various one-dimensional exclusion processes with rates depending on next–nearest occupancies have been shown to converge, in hydrodynamic scaling, to (fractional) porous–medium equations \cite{GoncalvesLandimToninelli2009,Goncalves2011,CardosoGoncalves2024}, driven transport with extended particles, most notably ribosome traffic, where a ribosome of footprint \(\ell>1\) advances only if a block of \(\ell\) empty sites is available ahead \cite{macdonald1968biopolymers,Schu97a,Laka03,shaw2003totally,chou2011rpp}, systems of cold-atom Rydberg gases, where kinetic constraints appear naturally in
the effective evolution equation of the Rydberg systems leading in the limit of strong dissipation to a glassy dynamics and slow relaxations.
\cite{lesanovsky2013kinetic_constraints,valado2016constraints}, granular “Tetris-like’’ compaction, where rearrangements need compatible multi-site voids \cite{caglioti1997tetris}; and in cellular automata for traffic flow, where a vehicle advances only if a multi-cell gap is free \cite{nagel1992traffic}.

A particular kinetically constrained lattice model that has attracted attention in recent years is the facilitated exclusion process (FEP)\cite{Ross00,basu2009active,gabel2010facilitated,Ayye23,Lei23}. This process was originally introduced to study absorbing phase transitions with a conserved field. Once the transient states have died out, the absorbing domain remains active and the facilitated exclusion process becomes equivalent to an exclusion process with extended particles.
This is a fluid-like system with short-range stationary correlations for which the hydrodynamic continuity equation was originally derived in \cite{Scho04}
and recently proved rigorously \cite{Blon20,Blon21}.

Models with multi–site kinetic constraints are generally divided into two broad classes: 
{\it cooperative} and {\it non-cooperative}, \cite{toninelli2005cooperative,shapira2024noncooperative}.
In a cooperative model, a particle can move only when {\it all} of a specified set of neighbouring sites satisfy the constraint; if any one of them fails, the move is strictly forbidden. 
By contrast, a non-cooperative model merely  modulates the local transition rate -- slowing it down or speeding it up -- without ever prohibiting the move outright.

A simple realization of these two scenarios in one dimension is the model of Ref. \cite{helbing1999global}.  
A particle hops forward with rate \(r\) when it is {\it preceded} by another particle, (i.e. $110 \;\xrightarrow{\,r\,}\; 101$)
whereas if it is preceded by a hole, it hops forward with a rate $q$ ( i.e. $010 \; \xrightarrow{\,q\,}\; 001$). The latter update rule can be reversed with rate 1 (i.e. $001 \; \xrightarrow{\,1\,} \; 010$)

For \(r,q>0\) every move remains possible, so the process is non-cooperative.  
If either rate is zero, certain moves become impossible unless neighbouring particles collaborate, and the dynamics enters a genuinely cooperative (or “facilitated’’) regime.  As pointed out above, 
such cooperative rules typically produce qualitatively different behaviour -- slow relaxation, dynamical heterogeneity, and glass-like arrest.  
In the present work, we study the cooperative limit in which forward motion is possible  only when a particle is preceded by a hole, corresponding to the  
choice \(r=0\) 
while keeping the asymmetry parameter \(q>0\)
arbitrary.  The fully constrained limit \(r=q=0\) maps, via particle–hole exchange, onto the totally asymmetric facilitated exclusion process.

Eliminating the \(110\!\to\!101\) transition maximally amplifies the cooperative character of the dynamics: a particle can advance only if its rear neighbour vacates first. As shown below, this produces pronounced caging and slow relaxation reminiscent of glass formers.  
The remaining control parameter \(q\) then tunes the system smoothly from a condensed phase where additive thermodynamics break down with extensive quantities such as the free energy becoming superextensive signaling effective long-range correlations ($q>1$), to a fluid-like phase where the system behaves as an (ordinary) lattice gas with short-range correlation ($q<1$).  At the transition point, the order parameter is
discontinuous as in a first order phase transition  while the susceptibility diverges at the transition point which is a characteristic property of second-order phase transition. This is a mixed order phase transition reminiscent of the one discussed in \cite{bar2014mixed}.
Our results are derived at half filling, where the model is reversible  and analytically tractable for any value of the asymmetry parameter $q$.

In Sec.\ref{sec1}, we define the model and review its main properties including the invariant measure at half filling. This section has already partially appeared in \cite{grosskinsky2025long}.
Sec. \ref{sec:qs-catalan}
develops the combinatorial backbone by defining the $(q,s)$–Catalan numbers that weight Dyck paths by area and number of returns, generalising thus the area $q$–Catalan numbers. We derive a $q$–functional equation and closed-form generating function, discuss special cases, and extract asymptotics (including the critical curve in $(q,s)$). Sec. \ref{sec:mixed} uses results obtained in the previous section to compute the asymptotic behaviour of key thermodynamic quantities such as the free energy, the internal energy and the susceptibility, allowing us to
characterize a mixed–order transition with discontinuous order parameter and divergant susceptibility at the transition point which is defined by the symmetric hopping $q=1$. 
Sec. \ref{sec:weakly} studies the weakly asymmetric regime $q_N=\exp(b/N^\alpha)$ and proposes the corresponding asymptotic behaviour of the different thermodynamic quantities.
Sec. \ref{sec:more} extends the analysis to densities above one half by reducing the dynamics to an active segment that again maps to Dyck paths, yielding explicit stationary weights and consequences for observables.

\section{Model and Main Properties}
\label{sec1}
We recall and elaborate on the key features introduced in \cite{grosskinsky2025long}. Consider a system of \(N\) particles on a one-dimensional lattice with \(L\) sites and periodic boundary conditions. The dynamics are governed by the local update rule:
\begin{equation}
    010 \underset{1}{\stackrel{q}{\rightleftarrows}} 001,
    \quad
    \text{with rate} \quad q \geq 0,
\end{equation}
where \(1\) denotes a particle and \(0\) denotes a hole. The special case \(q = 0\) corresponds to the facilitated Totally Asymmetric Exclusion   Process (F-TASEP) under hole-particle symmetry exchange. One inherited feature from F-TASEP is the existence of a critical density at half-filling.
For densities above half filling, the system splits into multiple disconnected ergodic components, as the number of available holes becomes critically low that some particles become permanently frozen, as we will explain in Sec. \ref{sec:more}. Consequently, the stationary current vanishes identically, even in finite systems and for any value of the asymmetry parameter. 

In contrast, for the number of particles strictly below half filling, the system is fully ergodic and contains no absorbing states—unlike F-TASEP, which exhibits absorbing states in this regime. In general, a non-zero stationary current is observed. An extensive analysis of the phenomenology in this regime is provided in \cite{grosskinsky2025long}.
At exactly half filling, we show that the system has two distinct ergodic components and that the stationary current vanishes identically.
Throughout the remainder of the paper (except in Sec.~\ref{sec:more}), we restrict ourselves to the half-filled case with an even system size, i.e., \(L = 2N\). We denote a configuration by \( \boldsymbol{\eta} = (\eta_1, \dots, \eta_L) \in A_N \), where \( \eta_k \in \{0, 1\} \) indicates the occupancy of site \(k\), and $A_N$ denotes the configuration space with $N$ particles. Define the spin variables as \( \sigma_k := 1 - 2\eta_k \), and introduce the height function:
$    h_k = \sum_{i=1}^{k} \sigma_i,$
with the convention \( h_0 = 0 \). Due to the half-filling condition, the height function satisfies \( h_L = 0 \).
Let \( h_{\text{min}} = \min_k h_k \) denote the minimum value of the height function.
The height function can be visualized as a path along the diagonal of a two-dimensional lattice: holes correspond to upward-left steps, and particles to downward-left steps, and the update rule can be stated in terms of the height path, Fig. ~\ref{fig:height}. Define a \textit{minimum site}  as a site \( k_0 \) where the height attains its minimum value, i.e., \( h_{k_0} = h_{\text{min}} \).
In the next lemma, we summarize the important properties of a half filled system.
\begin{figure}
    \centering
    \begin{tikzpicture}[yscale=1,xscale=1,scale = 0.6]
\draw [thick] (0,0) -- (1,1) -- (2,0) -- (3,1) ;
\draw [dashed] (0,0) -- (1,1) -- (2,2) -- (3,1) ;

\draw [->, thick] (3.7,1.1) -- (4.6,1.1);
\draw [<-, thick] (3.7,0.8) -- (4.6,0.8);

\node at (4.03,1.5) {q};
\node at (4.1,0.4) {1};

\draw [xshift = 5 cm,thick]    (0,0) -- (1,1) -- (2,2) -- (3,1) ;
\draw [xshift = 5 cm,dashed]    (0,0) -- (1,1) -- (2,0) -- (3,1) ;

\begin{scope}[yshift = -1cm]

\foreach \i in {0,...,2}
	{
    \draw (\i,0) -- (\i+1,0);
		\draw (\i,0) -- (\i,0.1) ;
	}
	
\foreach \i in {1}
	{
		\node at (\i+0.5,0.5){};
        \draw [very thick,fill] (\i+0.5,0.3) circle (5pt);
        }
\foreach \i in {0,2}
	{
		\node at (\i+0.5,0.5){};
        \draw [thick] (\i+0.5,0.3) circle (5pt);
        }

\end{scope}

\begin{scope}[yshift = -1cm, xshift = 5cm]
    
\foreach \i in {0,...,2}
	{
    \draw (\i,0) -- (\i+1,0);
		\draw (\i,0) -- (\i,0.1) ;
	}
	
\foreach \i in {2}
	{
		\node at (\i+0.5,0.5){};
        \draw [very thick,fill] (\i+0.5,0.3) circle (5pt);
        }
\foreach \i in {0,1}
	{
		\node at (\i+0.5,0.5){};
        \draw [thick] (\i+0.5,0.3) circle (5pt);
        }

\end{scope}

\end{tikzpicture}
    \caption{Local update in terms of the height function. The minimum is locally conserved by the dynamics}
    \label{fig:height}
\end{figure}
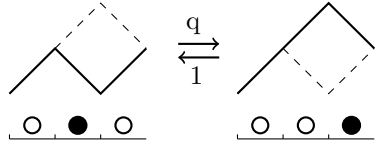

\begin{lemma}
\label{lemma1}
The half-filled system satisfies the following properties:

\begin{enumerate}[(a)]
    \item The dynamics preserve the minimum value of the height function.
    
    \item For a chosen configuration,  all sites where the height function attains its minimum share the same parity.(they are all even or odd sites), and this parity is preserved under the evolution.
    
    \item The configuration space $A_N$ decomposes into two connected ergodic components: one containing configurations with minimum sites at even positions, and the other with minimum sites at odd positions.
    By a slight abuse of terminology, we refer to these as the \textit{even} and \textit{odd} ergodic components, and denote them $\mathcal{A}_N^{(e)}$ and $\mathcal{A}_N^{(o)}$ respectivly.
    The two components are related by a single-site translation. Each contains exactly one antiferromagnetic configuration; \(010101\ldots01\) in the even component, and \(101010\ldots10\) in the odd component.
    
    \item Let \(J_t\) denote the integrated current, defined up to an additive constant as an integer-valued random variable that increases by one for each forward jump and decreases by one for each backward jump. Then \(J_t\) is only a function of the configuration with a minimum bound attained for an antiferromagnetic configuration, for which we set its value to zero. and an upper bound attained for any clustered configuration where all particles are consecutive, where its value is given by \(\frac{N(N - 1)}{2}\).
\end{enumerate}
\end{lemma}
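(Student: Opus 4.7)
The four assertions share a single local computation of the height response, which I would carry out first. Writing $h_j=\sum_{i=1}^{j}(1-2\eta_i)$ and temporarily ignoring periodicity, the pattern $\eta_{k-1}\eta_k\eta_{k+1}=010$ has heights $h_{k-2},\,h_{k-2}+1,\,h_{k-2},\,h_{k-2}+1$, while its image $001$ has $h_{k-2},\,h_{k-2}+1,\,h_{k-2}+2,\,h_{k-2}+1$; hence an interior flip changes only $h_k$, shifting it by $\pm 2$. A flip across the periodic cut instead shifts $h_1,\dots,h_{L-1}$ uniformly by $\mp 2$ while leaving $h_0=h_L=0$ fixed, which is identical to an interior flip after relabelling the origin of the cycle. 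In every case each $h_j$ changes by an even amount. This directly yields (b): since $h_j\equiv j\pmod 2$, every site attaining a given height shares that parity, so the minimum sites lie in a single parity class, and that class is preserved because height parities are. It also yields (a): after an interior $010\to 001$ flip the value $h_{k-2}=\text{old }h_k$ is untouched, so the global minimum, if attained at $k$, remains attained at $k-2$; the boundary case reduces to the interior one after relabelling.

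For (c), the parity of the minimum sites is a conserved labelling that partitions $A_N$ into two invariant sets $\mathcal A_N^{(e)}$ and $\mathcal A_N^{(o)}$, containing the antiferromagnetic configurations $010101\dots01$ and $101010\dots10$ respectively; the two are related by a one-site translation. To prove connectedness within each class I introduce the potential $A(\eta)$, the area of the Dyck path obtained by cyclically reindexing $\eta$ to start at any of its minimum sites. A half-filled spin word containing no occurrence of the pattern $001$ must be antiferromagnetic: either it has no two consecutive zeros, in which case the $N$ zeros occupy all even or all odd sites, or the forbidden pattern forces a run of zeros to propagate indefinitely, contradicting half filling. Hence any non-antiferromagnetic configuration admits a backward $001\to 010$ move, which by the local analysis strictly decreases $A$ by $2$. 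Since $A$ is a non-negative integer, iteration reaches the antiferromagnetic representative of the class, proving ergodicity.

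For (d), the same $A(\eta)$ serves as the exact potential for the current. Every forward flip becomes, after the cyclic relabelling, an interior flip that raises $A$ by exactly $2$, so $J(\eta):=\tfrac12\bigl(A(\eta)-A_{\mathrm{AF}}\bigr)$ is integer-valued, depends only on $\eta$, and increments by $1$ along every forward jump; therefore $J_t$ is a function of the configuration up to the additive constant fixed by $J(\mathrm{AF})=0$. The zigzag Dyck path has area $A_{\mathrm{AF}}=N$, while the triangular path $0,1,2,\dots,N,\dots,1,0$ — the unique maximiser of the Dyck area over paths of length $2N$ — has area $N^2$ and is exactly the cyclic image of the clustered configuration, giving $J_{\max}=\tfrac12(N^2-N)=N(N-1)/2$.

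The main obstacle I anticipate is making the cyclic relabelling precise enough that the rule $\Delta A=\pm 2$ applies uniformly to every admissible move. A boundary flip can move the site attaining the global minimum, so one must verify that the Dyck paths associated to the configurations before and after the move differ by a single interior box flip; only then is the area change exactly $\pm 2$. Once this bookkeeping is in place, all four parts of the lemma follow from the same combinatorial picture.
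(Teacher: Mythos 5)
Your proposal is correct and takes essentially the same route as the paper: the same local height analysis showing that minimum sites can only be created or destroyed two sites away from an existing one (parts (a),(b)), connectivity via iterating the backward move $001\to010$ down to the antiferromagnetic state (part (c)), and identification of the integrated current with the area under the height profile relative to its minimum, extremized by the zigzag and triangular Dyck paths (part (d)). Your explicit bookkeeping of the periodic cut and your argument that a $001$-free half-filled cyclic word must be antiferromagnetic simply make precise details the paper's proof leaves implicit, and the relabelling issue you flag is resolvable exactly as you suggest (anchor at a minimum site not involved in the flip, which always exists), so there is no genuine gap.
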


\begin{proof}
Let \( k \) be a site where the minimum height occurs, i.e., \( h_{\text{min}} = h_k \). This implies that \( \eta(k) = 1 \) and \( \eta(k + 1) = 0 \). For the site \( k \) to cease being a minimum, the particle at \( k \) must jump to \( k+1 \). However, this can only occur if the site \( k-1 \) is vacant, which means \( h_{k-2} = h_k \). In other words, \( k-2 \) must also be a minimum site. Consequently, a minimum site \( k \) cannot be annihilated unless another minimum exists at \( k-2 \).

Similarly, it can be verified that a new minimum site \( l \) can only be created if \( l-2 \) is already a minimum site. This proves property (a).

To prove property (b), observe that between two minimum sites \( k \) and \( l \), the number of upward steps must equal the number of downward steps, implying that \( l - k \) is even. Thus, \( k \) and \( l \) must share the same parity. Furthermore, the local dynamics ensure that when a minimum site is destroyed, another minimum site with the same parity is created.

As a consequence of property (b), configurations with different parities cannot belong to the same connected component. Starting from any configuration, if we iteratively apply the backward update \( 001 \to 010 \) whenever possible, we eventually reach one of the two \textit{anti-ferromagnetic} states: \( 010101\ldots01 \) for even configurations or \( 101010\ldots10 \) for odd configurations. This implies that configurations with the same parity are connected, which completes the proof of (c). The proof of property (d) is postponed to the next paragraph, in which we will introduce the area of a configuration and show that the integrated current can be identified with this area. The property (d) will be an immediate consequence.
\end{proof}

\noindent
\textbf{The Steady state:} Within each ergodic component, the invariant stationary distribution has a Gibbs form:
\begin{equation}
\label{invmeasure}
\pi(\bfeta ) \;=\; \frac{1}{Z_{N} (q)}\,q^{\,E(\boldsymbol{\eta})}, \quad \text{with } Z_N = \sum_{\bfeta \in \mathcal{A}_N^{(e)}} q^{\,E(\boldsymbol{\eta})}
\end{equation}
And the energy of a configuration $\boldsymbol{\eta}$ is given by:
\begin{equation}
\label{E}
E(\boldsymbol{\eta}) \;=\; \frac{1}{2}\,
\sum_{k=1}^{L}\!\Bigl(h_{k} (\bfeta )\;-\;h_{\text{min}}(\boldsymbol{\eta})
\;-\;\tfrac{1}{2}
\Bigr).
\end{equation}

The half-filling condition ensures that the energy \(E(\boldsymbol{\eta})\) is invariant under cyclic translation.
This energy admits a simple geometric interpretation: if we cyclically rotate the configuration so that it starts at a site of minimal height, the resulting height function \(h_k\) forms a Dyck path \cite{deutsch1999dyck}—a lattice path that remains non-negative and returns to zero at the endpoint. In this representation, \(E(\boldsymbol{\eta})\) corresponds to the area under the Dyck path, measured in units of full lattice squares, as illustrated in Fig.~\ref{fig:Energy}.

From the height representation (Fig.~\ref{fig:height}), it is evident that a forward jump of a particle increases the area by one, while a backward jump decreases it by one. Hence, this area is precisely the total integrated current \(J_t\) defined in Lemma~\ref{lemma1}(d). The maximum area is achieved when the Dyck path forms a single triangular shape, yielding \(E_{\rm max} = \frac{N(N - 1)}{2}\).

Denote by $\bfeta^{k,k+1}$ the configuration after swapping the values of sites $k$ and $k+1$. If this belongs to the same ergodic component as $\bfeta$ it is straightforward to show that:
\begin{equation}
   E(\bfeta^{k,k+1} )=E(\bfeta )+\eta_k -\eta_{k+1}, 
\end{equation}
which implies detailed balance for the distribution $\pi$ Eq. \eqref{invmeasure}.
\begin{equation}
    \pi(\bfeta) w(\bfeta \rightarrow \bfeta') = \pi(\bfeta') w(\bfeta' \rightarrow \bfeta).
\end{equation}

To fully characterize the invariant measure, we must compute the partition function, defined in \ref{invmeasure}.
Interpreting \( E(w) \) as the area under the height profile, this partition function becomes closely related to the area \( q \) - Catalan numbers, defined as \cite{CarlitzRiordan1964}:
\[
C_N(q) = \sum_{w \in \mathcal{D}_N} q^{a(w)},
\]
where \( \mathcal{D}_N \) is the set of Dyck paths of length \( 2N \), and \( a(w) \) denotes the area under the path \( w \). The number of Dyck paths is given by the \( n \)-th Catalan number \( C_N := C_N(1) = \frac{1}{N+1} \binom{2N}{N} \), whereas the size of a single parity configuration space $\mathcal{A}_N^{(e)}$ is  \(\frac{1}{2} \binom{2N}{N} \). Hence, there is no one-to-one correspondence between \( \mathcal{D}_N \) and $\mathcal{A}_N^{(e)}$.
One more reasonably would expect a mapping from $\mathcal{D}_N$ to the set of equivalence classes by cyclic translation of $A_N^{e}$. However, 
if a height profile has several global minima, it corresponds to several distinct
Dyck paths obtained by cyclic rotation. Thus, no such mapping exists either. To account for this degeneracy we
must refine the ordinary $q$-Catalan counting.  The required extension will account for the number of minima and is developed in the next section.

\begin{figure}
    \centering

\begin{tikzpicture}[yscale=1,xscale=1,scale = 0.6]
\draw [fill, color=black!0, fill=black!10] (1,1) -- (2,2) -- (3,1) -- (5,3) -- (7,1) -- (6, 0) -- (5,1)--(4,0)--(3,1)--(2,0)--(1,1);
\draw [thick] (0,0) -- (2,2) -- (3,1) -- (5,3) -- (8,0) -- (9,1) -- (10,0);
\draw [dashed] (0,0) -- (5,5) -- (10,0);
\draw [dashed] (2,0) -- (6,4);
\draw [dashed] (4,0) -- (7,3);
\draw [dashed] (6,0) -- (8,2);
\draw [dashed] (1,1) -- (2,0);
\draw [dashed] (2,2) -- (4,0);
\draw [dashed] (3,3) -- (6,0);
\draw [dashed] (4,4) -- (8,0);
\end{tikzpicture}
    \caption{Example of the height of a configuration of $L=10$ sites and energy $E=4$ represented by the number of shaded squares}
    \label{fig:Energy}
\end{figure}
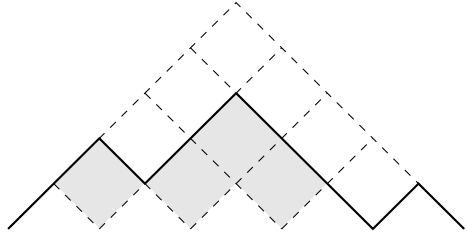

\section[Counting Dyck Paths by Area and Returns]{Counting Dyck Paths by Area and Returns: The $q\!-\!s$ Catalan numbers}
\label{sec:qs-catalan}

The Gibbs weight~\eqref{invmeasure} reduces the statistical mechanics of our
process to a purely combinatorial problem: enumerate height
configurations (equivalently, Dyck paths) with a fugacity \(q\) conjugate to the area \(E\).  A direct mapping to the classical
\(q\)-Catalan numbers \cite{CarlitzRiordan1964} is tempting but incomplete, because
several distinct microscopic configurations correspond to the \emph{same} Dyck
path whenever the height profile contains more than one minimum.  To disentangle
this degeneracy we introduce a second fugacity \(s\) that accounts for the number of  
returns to the axis (i.e.\ the number of global minima), and treat it on an equal
footing with the area fugacity \(q\).  The resulting two–variable polynomials,
which we dub the {\it \(q\)–\(s\) Catalan numbers}, encode the full partition
function of the model and interpolate smoothly between the ordinary Catalan
numbers (\(q=s=1\)) and the Carlitz–Riordan \(q\)-Catalans (\(s=1\)).  We now
formalise this definition and develop the key identities that will underpin the
asymptotic analysis.

Let $w \in \mathcal{D}_N$ be a Dyck path from $(0,0)$ to $(2N,0)$, and denote by $a(w)$ the area under the path. Let $m(w)$ represent the number of returns to the $x$-axis, including the final point $(2N,0)$ but excluding the starting point $(0,0)$.

\noindent
\textbf{Definition:}
We define the joint partition function $G_N(q,s)$, referred to as the $q\!-\!s$ Catalan numbers, by
\begin{equation}
  G_N(q,s) = \sum_{w \in \mathcal{D}_N} q^{a(w)} s^{m(w)}.
\end{equation}

This is a bivariate polynomial of degree $\frac{N(N-1)}{2}$ in $q$ and degree $N$ in $s$. We set $G_0(q,s) = 1$. The first few terms are given by:
$$
1, \; s, \; s (q + s), \; s (q^2 + q^3 + 2 q s + s^2), \; s (q^3 + 
   q^4 (2 + q + q^2) + q^2 (3 + 2 q) s + 3 q s^2 + s^3, \; ...
$$
When $s = 1$, the function reduces to the conventional $q$-Catalan numbers in the sense of Carlitz and Riordan~\cite{CarlitzRiordan1964}:
\begin{equation}
  G_N(q,1) = C_N(q).
\end{equation}
Before diving fully into the properties of the $q\!-\!s$ Catalan numbers, 
Let's establish why they are useful for our study by stating their link to the partition function.

\begin{lemma}
\label{lemma_two}
For every $n\ge 2$, the single parity partition function satisfies
\begin{equation}
Z_N(q)=N\int_{0}^{1}\frac{G_N(q,s)}{s}\,ds .
\label{Z_in_terms_of_G}
\end{equation}
\end{lemma}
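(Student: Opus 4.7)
The plan is a weighted double-counting argument that resolves the mismatch between $\mathcal{A}_N^{(e)}$ and $\mathcal{D}_N$ highlighted at the end of Section~\ref{sec1}: a configuration $\bfeta$ with $m(\bfeta)$ global minima gives rise to $m(\bfeta)$ different Dyck paths via cyclic rotation. The key device is the elementary identity $1/m=\int_0^1 s^{m-1}\,ds$, which converts the combinatorial factor $1/m(w)$ (that will appear after bijectively rewriting $Z_N(q)$ in terms of Dyck paths) into the bivariate polynomial $G_N(q,s)$ inside an integral.

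Concretely, I would introduce the set of marked configurations
\[
\mathcal{M}_N=\bigl\{(\bfeta,k'):\bfeta\in\mathcal{A}_N^{(e)},\ h_{k'}(\bfeta)=h_{\min}(\bfeta)\bigr\}
\]
and exhibit a bijection with the product set $\mathcal{N}_N=\mathcal{D}_N\times\{2,4,\ldots,2N\}$, given by $(\bfeta,k')\mapsto(w,k'=k)$, where $w$ is the Dyck path obtained by reading $\bfeta$ cyclically starting at site $k'+1$. Well-definedness follows because $k'$ being a minimum site ensures the relative height stays non-negative; the inverse map places a chosen Dyck path on the cycle at the chosen even position, automatically producing a configuration in $\mathcal{A}_N^{(e)}$ whose height is minimized at that site (Lemma~\ref{lemma1}(b) ensures it lies in the even component). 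Two invariants are essential for the counting: under this bijection $E(\bfeta)=a(w)$ (already established geometrically in Section~\ref{sec1}), and $m(\bfeta)=m(w)$, because returns of the Dyck path correspond bijectively to positions of $\bfeta$ whose height equals $h_{\min}(\bfeta)$.

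Weighting each element of $\mathcal{M}_N$ by $q^{E(\bfeta)}/m(\bfeta)$ and summing in two ways then completes the proof. Summed over $\bfeta$ first, each configuration appears exactly $m(\bfeta)$ times and the $m(\bfeta)$ in the denominator cancels, producing $\sum_{\bfeta\in\mathcal{A}_N^{(e)}}q^{E(\bfeta)}=Z_N(q)$. Summed via the bijection, the index $k$ runs independently over $N$ even positions, giving
\[
N\sum_{w\in\mathcal{D}_N}\frac{q^{a(w)}}{m(w)}=N\int_0^1\sum_{w\in\mathcal{D}_N}q^{a(w)}s^{m(w)-1}\,ds=N\int_0^1\frac{G_N(q,s)}{s}\,ds,
\]
where the finite sum and the integral are interchanged trivially. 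Equating the two expressions yields the claim.

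The only delicate point is the bookkeeping of the bijection: one has to verify that the marker $k$ on the Dyck-path side runs over $N$ distinct values independently of any rotational self-symmetry of $w$ or $\bfeta$, and that the convention for $m(w)$ (returns including the endpoint $(2N,0)$ but not the origin) matches the count of global minima of $\bfeta$ on the cycle. Both points are straightforward once the setup is fixed; the substantive content of the proof is the cancellation of $1/m$ by the weighted marking together with the integral representation $1/m=\int_0^1 s^{m-1}\,ds$.
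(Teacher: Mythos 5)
Your proof is correct, and it reaches the central identity $Z_N(q)=\sum_{w\in\mathcal{D}_N}\frac{N}{m(w)}q^{a(w)}$ (after which the step $1/m=\int_0^1 s^{m-1}\,ds$ is the same as in the paper) by a genuinely different route. The paper establishes that identity by quotienting both $\mathcal{A}_N^{(e)}$ and $\mathcal{D}_N$ by cyclic rotation and comparing orbit sizes: it first assumes $N$ prime so that every non-antiferromagnetic orbit of configurations has size exactly $N$, treats the antiferromagnetic state as a special singleton orbit, and then argues (somewhat briefly) that for composite $N$ the extra rotational symmetries of order $d\mid N$ are matched between corresponding orbits so the ratio $N/m(w)$ survives. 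Your marked-configuration double counting --- the bijection $(\bfeta,k')\mapsto(w,k)$ between configurations decorated with a chosen global-minimum site and pairs (Dyck path, even position) --- avoids orbits altogether: the marker automatically distinguishes rotated copies, so rotational self-symmetry of $\bfeta$ or $w$ never enters, and the antiferromagnetic state needs no special treatment. In particular the ``delicate point'' you flag at the end is not actually delicate in your setup: on the Dyck-path side the marker is just an arbitrary even site, independent of $w$, so the product structure $\mathcal{D}_N\times\{2,4,\dots,2N\}$ and hence the factor $N$ are immediate; injectivity holds because the pair $(\bfeta,k')$ determines both the word read from $k'+1$ and the marker, and surjectivity because planting any Dyck path at an even site yields an even-component configuration minimized there. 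The invariants $E(\bfeta)=a(w)$ and $m(\bfeta)=m(w)$ are correctly identified (the latter because the cyclic minima of $\bfeta$ are exactly the zero-touches of $w$, with the start and end of $w$ identified on the ring, matching the convention that counts the endpoint but not the origin). What your approach buys is a cleaner, fully uniform argument with no primality assumption and no symmetry-matching step; what the paper's buys is a direct picture of the measure as uniform over rotation orbits weighted by $q^{E}$, at the cost of the extra bookkeeping you dispense with.
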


\begin{proof}
Assume temporarily that $N$ is a prime number.  
We restrict ourselves to the even ergodic component $\mathcal{A}^{(\mathrm e)}_N$ (the odd component is identically analogous).  Define the equivalence relation $\sim$ on $\mathcal{A}^{(\mathrm e)}_N$ by “cyclic rotation by an even number of sites.”  Each orbit under this relation contains exactly $N$ configurations, except for the anti-ferromagnetic state, which is invariant and forms an orbit of size one.

Similarly, we define an equivalence relation on the set $\mathcal{D}_N$ of Dyck paths: two Dyck paths are equivalent if one can be obtained from the other via cyclic rotation.  
Clearly, the number of minima $m(w)$ of a Dyck path $w$ is constant within each orbit and equals the orbit's size except for the anti-ferromagnetic Dyck path, which forms a singleton orbit. There is an obvious bijection between the set of orbits of $\mathcal{D}_N$ and those of $\mathcal{A}_N^{(e)}$.  The ratio of the size of an orbit of $A_N^{e}$ to its corresponding orbit in $\mathcal{D}_N$ is trivially $N/m(w)$ with $w$ being any represntative path. If $N$ is not prime, certain orbits may exhibit additional symmetry of order $d \mid N$, but these symmetries are matched between corresponding orbits in $\mathcal{D}_N$ and $\mathcal{A}_N^{(e)}$, preserving thus the size ratio.
Hence,
\[
  Z_N(q) =\sum_{\bfeta \in A_N^{e}} \,q^{a(\bfeta )}
  =
  \sum_{w \in \mathcal{D}_N} \frac{N}{m(w)}\,q^{a(w)}.
\]
Finally, using the identity $\int_{0}^{1} s^{m-1} \, ds = \frac{1}{m}$ and the definition
  $G_N(q,s)$
we obtain the desired identity~\eqref{Z_in_terms_of_G}.
\end{proof}

\begin{lemma}
     The $q\!-\!s$  Catalan numbers verify the following recurrence relation for $N \geq 1$

\begin{equation}\label{rec}
    G_{N}(q,s) = \sum_{0 \leq k \leq N-1} s q^{N-k-1} G_{k}(q,s) G_{N-k-1}(q,1) 
\end{equation}
\end{lemma}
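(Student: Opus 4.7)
The natural approach is the classical first-return decomposition of Dyck paths, enhanced to track the joint statistic $(a,m)$.

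Every non-empty $w \in \mathcal{D}_N$ admits a unique decomposition $w = U \cdot w_1 \cdot D \cdot w_2$ in which $2(N-k)$ is the position of the first return of $w$ to the x-axis, the elevated portion $w_1 \in \mathcal{D}_{N-k-1}$ lies strictly between the initial up-step and the first-return down-step, and $w_2 \in \mathcal{D}_k$ is the remainder. As $k$ ranges over $\{0,1,\ldots,N-1\}$ this yields a bijection
\begin{equation*}
\mathcal{D}_N \;\longleftrightarrow\; \bigsqcup_{k=0}^{N-1} \mathcal{D}_{N-k-1} \times \mathcal{D}_k .
\end{equation*}

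Under this bijection I would establish the two additivity identities
\begin{equation*}
a(w) = (N-k-1) + a(w_1) + a(w_2), \qquad m(w) = 1 + m(w_2).
\end{equation*}
The return count is immediate: $w_1$ sits strictly above the x-axis inside $w$ and contributes no returns, the first-return down-step contributes exactly one, and $w_2$ contributes its own returns unchanged. The area identity I would derive directly from the height-sum representation implicit in \eqref{E}, i.e. $a(w) = \tfrac{1}{2}\sum_{i=1}^{2N} h_i(w) - N/2$ for a Dyck path starting and ending at zero: the lift of $w_1$ by one unit adds $2(N-k-1)$ to the height sum, the initial $U$ contributes an extra $1$ (while the first-return $D$ lands back at height $0$), and the heights along $w_2$ are unchanged. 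Halving and subtracting $N/2$ produces the advertised increment $N-k-1$.

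With these identities in hand, substituting into the definition of $G_N(q,s)$ and factoring gives
\begin{equation*}
G_N(q,s) = \sum_{k=0}^{N-1} s\, q^{N-k-1} \Bigl(\sum_{w_1 \in \mathcal{D}_{N-k-1}} q^{a(w_1)}\Bigr) \Bigl(\sum_{w_2 \in \mathcal{D}_k} q^{a(w_2)} s^{m(w_2)}\Bigr),
\end{equation*}
and recognising the two inner sums as $C_{N-k-1}(q) = G_{N-k-1}(q,1)$ and $G_k(q,s)$ respectively yields the claimed recurrence. The only delicate point is getting the exponent $N-k-1$ correct: since the paper's area statistic counts "full lattice squares" rather than the geometric area under the path, it is safer to work from the explicit height-sum formula than to rely on a geometric picture, where an off-by-one or factor-of-two error would be easy to commit.
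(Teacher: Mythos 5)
Your proposal is correct and follows essentially the same route as the paper: the first-return decomposition $w = U\,w_1\,D\,w_2$, with additivity of the area (plus the elevation shift $N-k-1$) and of the return count (plus the single return contributed by the first-return down-step), which the paper packages via the elevated-path generating function $\hat{G}_{N-k}(q,s)=s\,q^{N-k-1}G_{N-k-1}(q,1)$. Your explicit verification of the area increment through the height-sum formula is a sound (if more detailed) way of justifying what the paper states as additivity of $a$ under concatenation together with the elevation bookkeeping.
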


\begin{proof}
First notice that both of the properties $a$ and $m$ are additive with respect to concatenation of Dyck paths. Following the method of \cite{deutsch1999dyck}, define $\hat{\mathcal{D}}_N = \{ w \in \mathcal{D}_N; m(w) =1  \}$, the set of elevated Dyck paths that don't return to zero except at the extremities. The corresponding partition function would be:
\begin{equation}
  \hat{G}_N(q,s) := \sum_{w \in \hat{\mathcal{D}}_N} q^{a(w)} s^{m(w)}   = s q^{N-1} G_{N-1}(q,1)   
\end{equation}
A recurrence is established by cutting each path at the first return to zero, which can happen at a site $ 0\leq k\leq N-1$
\begin{equation}
    G_{N}(q,s) = \sum_{0 \leq k \leq N-1} G_{k}(q,s)\hat{G}_{N-k}(q,s) =  \sum_{0 \leq k \leq N-1} s q^{N-k-1} G_{k}(q,s) G_{N-k-1}(q,1)
\end{equation}
\end{proof}
To analyse the large-\(N\) growth of \(G_N(q,s)\) we can reformulate the recurrence
\eqref{rec} in terms of the ordinary generating function defined as:
\begin{equation}
\Omega(q,s,z)\;=\;\sum_{n\ge 0} G_N(q,s)\,z^{N}.
\end{equation}
Multiplying \eqref{rec} by \(z^{N}\) and summing over \(N\ge 0\) yields the
\(q\)-functional (or \(q\)-algebraic) relation.
\begin{equation}\label{q_fun}
  \Omega(q,s,z)-1 \;=\; s\,z\,\Omega(q,s,z)\,\Omega\!\bigl(q,1,qz\bigr),
\end{equation}
whose dominant singularities in the complex \(z\)-plane determine the
asymptotic form of \(G_N(q,s)\) via the transfer theorems of analytic
combinatorics \cite{flajolet2009analytic,barbe2013q,barbe2020q}.

\paragraph*{Special cases.}
For \(q=s=1\) equation~\eqref{q_fun} collapses to a quadradic equation giving the generating function of the (Non deformed) Catalan numbers:
\begin{equation}
    \mathcal{G}_{C}(z) :=
\Omega(1,1,z)\;=\;\frac{1-\sqrt{1-4z}}{2z}.
\label{generating_C}
\end{equation}
Setting \(q=1\) but keeping \(s\) arbitrary produces a closed expression,
\begin{equation}\label{Omega1sx}
\Omega(1,s,z)\;=\;\frac{2}{\,2+s\!\left(\sqrt{1-4z}-1\right)},
\end{equation}
With the first few terms of its expansion:
\[
  \Omega(1,s,z)=1+s z+(s+s^{2})z^{2}+(2s+2s^{2}+s^{3})z^{3}+O(z^4).
\]
The coefficient of \(s^{m}z^{N}\) counts Dyck paths of semi-length \(N\) that
return to the axis exactly \(m\) times, hence, these coefficients are simply the classical Ballot numbers, and $G_N(q,s)$ reduces to Ballot polynomials:
\[
  G_{N}(1,s)=\sum_{m=0}^{N}
             \binom{2N-m-1}{\,N-m\,}\frac{m}{N}\,s^{m}\qquad(N>0).
\]
\paragraph*{Hypergeometric representation:}
Introduce the \(q\)-hypergeometric series
\begin{equation}
      F(q,z)\;=\;\sum_{k\ge 0}\frac{(-1)^{k}q^{k^{2}-k}z^{k}}{(q)_{k}},
  \quad\text{with }(q)_{k}=\prod_{j=1}^{k}(1-q^{j}),\;(q)_{0}=1 .
  \label{F}
\end{equation}
\begin{lemma}\label{Ram}
The generating function \(\Omega(q,s,z)\) can be written as
\begin{equation}\label{hyper}
  \Omega(q,s,z)\;=\;
  \frac{F\!\bigl(q,q z\bigr)}
       {s F\!\bigl(q,z\bigr)+(1-s)\,F\!\bigl(q,q z\bigr)}.
\end{equation}
\end{lemma}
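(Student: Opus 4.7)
The plan is to obtain the formula in two stages: first, rewrite the functional equation \eqref{q_fun} as a fixed-point formula for $\Omega$, then identify the special section $\Omega(q,1,z)$ with a logarithmic $q$-derivative of $F$, and finally substitute back.

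First I would isolate $\Omega(q,s,z)$ algebraically in \eqref{q_fun}: since $\Omega(q,s,z)\bigl(1-sz\,\Omega(q,1,qz)\bigr)=1$, we get
\begin{equation}
\Omega(q,s,z)\;=\;\frac{1}{1-sz\,\Omega(q,1,qz)} .
\label{planA}
\end{equation}
In particular, setting $s=1$ yields the scalar $q$-functional equation $\Omega(q,1,z)=\bigl[1-z\,\Omega(q,1,qz)\bigr]^{-1}$, i.e.\ $\Omega(q,1,z)-z\,\Omega(q,1,z)\,\Omega(q,1,qz)=1$. So the whole lemma reduces to a closed-form expression for the single function $\Omega(q,1,z)$.

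The key claim is then that $\Omega(q,1,z)=F(q,qz)/F(q,z)$, and the hard part is precisely linearising the $q$-algebraic equation into a linear $q$-difference equation for the denominator. Concretely, I would guess $\Omega(q,1,z)=\Phi(qz)/\Phi(z)$ for some $\Phi$; plugging this ansatz into $\Omega(q,1,z)\bigl(1-z\,\Omega(q,1,qz)\bigr)=1$ gives the homogeneous linear $q$-difference equation
\begin{equation}
\Phi(z)\;=\;\Phi(qz)-z\,\Phi(q^{2}z).
\label{planB}
\end{equation}
I would then verify that $F(q,z)$ in \eqref{F} solves \eqref{planB} by direct series manipulation: expand $F(q,qz)-zF(q,q^{2}z)$, align both sums on the same power $z^{k}$, and use the elementary identity $(q)_{k}=(1-q^{k})(q)_{k-1}$ to collapse the bracket $q^{k}/(q)_{k}+q^{-k}/(q)_{k-1}$ into $q^{-k}/(q)_{k}$, recovering exactly the $k$-th coefficient of $F(q,z)$. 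Since $F(q,0)=1$, the ratio $F(q,qz)/F(q,z)$ is a well-defined formal power series with constant term $1$, and uniqueness of the formal-series solution of the $q$-algebraic equation (coefficients determined recursively by matching powers of $z$) forces $\Omega(q,1,z)=F(q,qz)/F(q,z)$.

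Finally, I would substitute this identity into \eqref{planA}. From \eqref{planB}, $zF(q,q^{2}z)=F(q,qz)-F(q,z)$, so
\begin{equation}
1-sz\,\Omega(q,1,qz)\;=\;1-s\,\frac{zF(q,q^{2}z)}{F(q,qz)}
\;=\;\frac{sF(q,z)+(1-s)F(q,qz)}{F(q,qz)},
\label{planC}
\end{equation}
and \eqref{hyper} follows by taking reciprocals. The main obstacle is really step \eqref{planB}: one has to spot the right ansatz $\Omega(q,1,z)=\Phi(qz)/\Phi(z)$ that turns the nonlinear quadratic relation into a \emph{linear} second-order $q$-difference equation; once that is done, recognising that the series $F$ of Ramanujan/Rogers type solves \eqref{planB} is a short computation, and the rest is algebraic bookkeeping.
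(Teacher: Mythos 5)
Your proposal is correct and takes essentially the same route as the paper: both hinge on the contiguity identity $zF(q,q^{2}z)=F(q,qz)-F(q,z)$ established by the same series manipulation, and on checking the resulting closed form against the functional equation \eqref{q_fun}; your reduction to the $s=1$ section via the ansatz $\Omega(q,1,z)=\Phi(qz)/\Phi(z)$ and the explicit uniqueness-of-formal-power-series argument simply make precise the step the paper calls ``straightforward''. One cosmetic slip: with the factor $(-1)^{k}q^{k^{2}-k}$ pulled out, the bracket should read $q^{k}/(q)_{k}+1/(q)_{k-1}=1/(q)_{k}$ (your mixed version $q^{k}/(q)_{k}+q^{-k}/(q)_{k-1}$ is not the right combination), but this does not affect the validity of the argument.
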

\begin{proof}
The proof is based on the q-functional equation \ref{q_fun} in addition to using a standard property of the q-hypergeometric function, namely: $z F(q, q^2 z) = - F(q,z) + F(q,q z) $, for which we recall the proof:
\begin{align}
        z F(q, q^2 z) &= \sum_{k \geq 0} (-1)^k q^{k^2 - k -2} (q^2 z)^{k+1} /(q)_{k} = \sum_{k \geq 0} (-1)^k q^{k^2 - k -2}
    \frac{(q^2 z)^{k+1} (1-q^{k+1})}{(q)_{k+1}} \\
   &= \sum_{k \geq 0} (-1)^k q^{k^2 - k -2}
    \frac{(q^2 z)^{k+1} }{(q)_{k+1}}
    -
 \sum_{k \geq 0} (-1)^k q^{k^2 - k -2}
    \frac{(q^2 z)^{k+1} q^{k+1}} {(q)_{k+1}}
\end{align}

Since the term associated with $k=0$ is equal for both sums:
\begin{align}
        z F(q, q^2 z) &= -\sum_{k \geq 0} (-1)^k q^{k^2 - 3 k}
    \frac{(q^2 z)^{k} }{(q)_{k}}
    +
 \sum_{k \geq 0} (-1)^k q^{k^2 - 3 k}
    \frac{(q^2 z)^{k} q^{k}} {(q)_{k}} \\
 &= -\sum_{k \geq 0} (-1)^k q^{k^2 -  k}
    \frac{z^{k} }{(q)_{k}}
    +
 \sum_{k \geq 0} (-1)^k q^{k^2 -  k}
    \frac{(q z)^{k}} {(q)_{k}} =
     -
    F(q,z)
    +
 F(q,q z) 
\end{align}
It's straightforward to show that \ref{q_fun}.
can be satisfied by 

\begin{equation}
  \Omega(q,s,z)\;=\;
  \frac{F\!\bigl(q,qz\bigr)}
       {F\!\bigl(q,qz\bigr)-sz\,F\!\bigl(q,q^2 z\bigr)}.
\end{equation}
Applying the previous property once more leads to  \ref{hyper} 
\end{proof}
Note that in the limit $s=1$ Eq. \ref{hyper} reduces the ratio $\frac{F(q,qz)}{F(q,z)}$, which is the generating  function of the q-Catalan number (See chapter 7 of \cite{andrews1998theory})

\subsection{Asymptotic behavior of $q\!-\!s$ Catalan numbers:}
the asymptotic behaviour of $q\!-\!s$ Catalan numbers depends critically  on the value of $q$ and eventually on $s$. It is convenient to distinguish three regimes. $0<q<1$, $q=1$ and $q>1$.

\noindent
$\boxed{q < 1}$
For each fixed \(0<q<1\), the series \ref{F}
is entire in \(z\), and the meromorphic generating function \ref{hyper}
can only have poles at the real zeros of the denominator. Denote $\Phi(z)\;=\;s\,F(q,z)-(1-s)\,F(q,qz)$. Restricted on the real axis, $\Phi$ is a real function and satisfies:
$$\Phi(0)=s>0,
\quad
\lim_{z\to\infty}\Phi(z)=-\infty,
\quad
\Phi'(z)
=s\,\partial_zF(q,z)+(1-s)\,q\,\partial_zF(q,qz)<0\ \text{ for all } z>0,$$
so \(\Phi\) is strictly decreasing and crosses zero exactly once.
Denote its unique simple root by \(\rho(q,s)>0\). 
Residue calculus allows to retrieve the asymptotic behavior:
\[
G_N(q,s)\sim\;
\frac{-F(q,\rho)}{\rho\,\Phi'(\rho)}\;\rho^{-N},
\]
Note that there is a critical behavior for a set of values $(q,s)$ defined by the curve
\(\displaystyle s_c(q)=\frac{F(q,q)}{F(q,q^2)}\)
that
separates two regimes:
\[
\begin{cases}
s>s_c(q): & \rho<1,\ G_N\text{ grows exponentially},\\
s<s_c(q): & \rho>1,\ G_N\text{ decays exponentially},\\
s=s_c(q): & \rho=1,\ G_N\sim\bigl(F(q,1)/\Phi'(1)\bigr)
\end{cases}
\]
This critical curve is defined for a domain $0 < q <  q_0$ for such that $s_c(q) > 0$ and $q_0$ is the solution of the equation $F(q,q) = 0$, numerically $q_0 \approx 0.58$. For $q>q_0$, $\rho<1$ and $G_N$ grow exponentially.

\noindent
$\boxed{q = 1}$ The asymptotic behavior can be easily obtained from the generating function Eq. \ref{Omega1sx}. The dominant singularity of \(\Omega\) depends on \(s\), producing three
distinct asymptotic regimes:
\[
  G_N \;\sim\;
  \begin{cases}
    \displaystyle
    \dfrac{s}{(2-s)^{2}}\,
    \dfrac{4^{N}}{\sqrt{\pi}\,N^{3/2}},
      & 0<s<2,\\[1.1em]
    \displaystyle
    \dfrac{4^{N}}{\sqrt{\pi N}},
      & s=2,\\[1.1em]
    \displaystyle
    \dfrac{s-2}{s-1}\,
    \bigl(\tfrac{s^{2}}{s-1}\bigr)^{N},
      & s>2.
  \end{cases}
  \qquad(N\to\infty)
\]

\begin{itemize}[leftmargin=1.8em]
\item \textbf{Sub-critical} \(\mathbf{(0<s<2)}\):  
  The branch point of \(\sqrt{1-4z}\) at \(z=\tfrac14\) is closest to the
  origin.  A standard transfer theorem then yields the “Catalan’’
  amplitude \(4^{N}N^{-3/2}\).

\item \textbf{Critical point} \(\mathbf{(s=2)}\):  
  The simple pole in the denominator collides with the branch point,
  increasing the algebraic order to \(N^{-1/2}\).

\item \textbf{Super-critical} \(\mathbf{(s>2)}\):  
  The pole moves inward to  
  \(
    z_\star=(s-1)/s^{2}<\tfrac14
  \),
  outstripping the branch point; its residue
  \(
    (s-2)/(s-1)
  \)
  sets the prefactor of the pure exponential
  \(
    (s^{2}/(s-1))^{N}.
  \)
\end{itemize}

\noindent
$\boxed{q > 1}$ The  asymptotic behavior of $G_N(q,s)$ grows superexponentially as $q^{\binom{N}{2}}$
as it becomes identical to the q-Catalan numbers, up to a constant factor $s$, as stated in the following proposition:
\begin{proposition}
\label{prop1}
For every fixed $q>1$ and every fixed $s>0$,
\begin{equation}
\label{eq:Gs-over-Cs}
\lim_{N \to \inf}
\frac{G_{N}(q,s)}{C_{N}(q)} = s.
\end{equation}
\end{proposition}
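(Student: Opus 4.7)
The plan is to express $\Omega(q,s,z)$ as a rational function of $\Omega(q,1,z)$ alone, extract the coefficient of $z^N$ as a finite sum in $s$, and bound the tail using the uniform upper bound $C_n(q)\le P(q)\,q^{\binom{n}{2}}$, with $P(q):=\prod_{j\ge 1}(1-q^{-j})^{-1}$, furnished by the standard bijection between Dyck paths of semi-length $n$ and Young diagrams inscribed in the staircase $\delta_n$. First I would combine the $q$-functional equation \eqref{q_fun} with its $s=1$ specialisation, $\Omega(q,1,z)-1 = z\,\Omega(q,1,z)\,\Omega(q,1,qz)$, to eliminate $\Omega(q,1,qz)$, obtaining
\begin{equation*}
\Omega(q,s,z)\;=\;\frac{\Omega(q,1,z)}{s-(s-1)\,\Omega(q,1,z)}\;=\;1+s\sum_{k\ge 0}(s-1)^{k}\bigl(\Omega(q,1,z)-1\bigr)^{k+1}.
\end{equation*}
Reading off $[z^N]$ (a finite sum since $\Omega(q,1,z)-1$ starts at $z^1$) and dividing by $C_N(q)$ would then give
\begin{equation*}
R_N(q,s)\;:=\;\frac{G_N(q,s)}{C_N(q)}\;=\;s\Bigl[\,1+\sum_{k=1}^{N-1}(s-1)^{k}\,b_k^{(N)}\Bigr],\qquad b_k^{(N)}\;:=\;\frac{[z^N]\bigl(\Omega(q,1,z)-1\bigr)^{k+1}}{C_N(q)},
\end{equation*}
reducing the claim to $\sum_{k\ge 1}|s-1|^{k}b_k^{(N)}\to 0$.

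I would then expand $[z^N]\bigl(\Omega(q,1,z)-1\bigr)^{k+1}=\sum_{n_1+\cdots+n_{k+1}=N,\,n_i\ge 1}\prod_i C_{n_i}(q)$ and apply $C_n(q)\le P(q)\,q^{\binom{n}{2}}$ to each factor. Since $\sum_i\binom{n_i}{2}$ is a convex function of the composition, it attains its maximum $\binom{N-k}{2}$ at the extreme composition $(N-k,1,\dots,1)$; combining with the trivial lower bound $C_N(q)\ge q^{\binom{N}{2}}$ coming from the single triangular Dyck path yields
\begin{equation*}
b_k^{(N)}\;\le\;\binom{N-1}{k}P(q)^{k+1}\,q^{\binom{N-k}{2}-\binom{N}{2}}\;=\;\binom{N-1}{k}P(q)^{k+1}\,q^{-k(2N-k-1)/2}.
\end{equation*}
Using the inequality $(2N-k-1)/2\ge N/2$ for $1\le k\le N-1$ and setting $y:=|s-1|\,P(q)\,q^{-N/2}$, the binomial theorem would then collapse the tail:
\begin{equation*}
\sum_{k=1}^{N-1}|s-1|^k b_k^{(N)}\;\le\;P(q)\sum_{k=1}^{N-1}\binom{N-1}{k}\,y^{k}\;=\;P(q)\bigl[(1+y)^{N-1}-1\bigr]\;\le\;P(q)\bigl[e^{(N-1)y}-1\bigr].
\end{equation*}
Since $q>1$ makes $(N-1)y=(N-1)|s-1|\,P(q)\,q^{-N/2}$ decay super-exponentially, the right-hand side tends to $0$, establishing $R_N(q,s)\to s$.

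The main obstacle I foresee is tight uniform-in-$k$ control of $b_k^{(N)}$: the generating-function identity above is only a formal power series identity in $z$ (the series $\Omega(q,1,z)$ has zero radius of convergence when $q>1$), and crude bounds such as $C_n(q)\le 4^n\,q^{\binom{n}{2}}$ or $\prod_i C_{n_i}\le C_N$ would fall short of covering the full range $s>0$, $q>1$. The sharp prefactor $P(q)$ supplied by the staircase-partition bound is precisely what makes the binomial collapse work for every fixed $s>0$ and $q>1$.
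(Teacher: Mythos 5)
Your proof is correct, and it takes a genuinely different route from the paper. The paper argues directly on the combinatorial decomposition by the number of returns, $G_N(q,s)=s\,q^{N-1}C_{N-1}(q)+\sum_{m\ge 2}s^m C_{N,m}(q)$: the leading term requires the known asymptotics $C_N(q)\sim q^{\binom{N}{2}}/\phi(1/q)$ to show $q^{N-1}C_{N-1}(q)/C_N(q)\to 1$, and the tail is controlled through an area-loss inequality for paths with $m$ returns together with a bound $\hat C_N(q)\le K(q)\,q^{\binom{N}{2}}$ on elevated paths. You instead combine the $q$-functional equation \eqref{q_fun} with its $s=1$ specialisation to get the exact formal-series identity $\Omega(q,s,z)=\Omega(q,1,z)/\bigl(s+(1-s)\Omega(q,1,z)\bigr)$ (which is the $F$-free form of Lemma \ref{Ram}), so that the $k=0$ term of your expansion is exactly $s\,C_N(q)$ — no asymptotic input on $C_N(q)$ is needed for the leading contribution — and the remainder is killed by the explicit staircase-partition bound $C_n(q)\le P(q)\,q^{\binom{n}{2}}$, the trivial lower bound $C_N(q)\ge q^{\binom{N}{2}}$, convexity of $\sum_i\binom{n_i}{2}$ over compositions, and the binomial theorem. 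The payoff of your route is that it is self-contained (it bypasses the $\phi(1/q)$ asymptotics entirely) and yields an effective, uniform error bound of order $N\,|s-1|\,P(q)\,q^{-N/2}$, valid for all $s>0$; the paper's route stays closer to the return-number structure that is reused elsewhere (the recurrence \eqref{rec} and Lemma \ref{lemma_two}) and identifies the subleading contributions by their combinatorial origin. Two trivial quibbles: the relevant staircase is $\delta_{n-1}=(n-1,\dots,1)$ rather than $\delta_n$, and $(N-1)q^{-N/2}$ decays exponentially rather than super-exponentially — neither affects the argument.
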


\begin{proof}
Let $w\in D_{N}$ be a Dyck path of semi-length $N$ and with number of returns to zero $m(w)$.  The maximal area attainable at semi-length $N$ is
$a_{\max}(N)=\tfrac{N(N-1)}{2}$, achieved uniquely by a full triangle path for which $m=1$.
Write the returns of a general path as $0=r_{0}<r_{1}<\dots<r_{m}=2N$ and set $k_{i}=\tfrac12(r_{i}-r_{i-1})$; then $\sum_{i=1}^{m}k_{i}=N$ and
\begin{equation}
a_{\max}(N)-a(w) \ge \frac12\Bigl[N(N-1)-\sum_{i=1}^{m}k_{i}(k_{i}-1)\Bigr]
\ge 
(m-1) (N - m/2)
\label{eq:area-loss}
\end{equation}
where the last inequality uses $k_{1}=N-m+1$ and $k_{2}=\dots=k_{m}=1$, the configuration that maximaizes the area of a Dyck path with $m$ returns to zero.
To prove the proposition, we proceed by decomposing $G_N(q,s)$ into the path with $m=1$, which can be mapped to elevated Dyck path of semi-length $N-1$,
and those for $m \geq 2$
\begin{equation}
     G_{N}(q,s)=
  s q^{N-1}
  C_{N-1}(q)+ \sum_{m=2}^{N}s^{m} C_{N,m}(q)
  \quad
\qquad
 C_{N,m}(q) := \sum_{w:\,m(w)=m}q^{a(w)}.
 \label{eq:decompose}
\end{equation}

Remember that the the asymptotics for $C_N(q)$ is well known for $q>1$ \cite{furlinger1985q, flajolet2009analytic}, and given by $C_N(q) \sim
\frac{s}{\phi(1/q)} q^{\binom{N}{2}} 
$, with $\phi$ is the q-pochammer Euler function. This is enought to show that the first term of Eq. behaves as $s C_N(q)$:
\begin{equation}
 \lim_{N \rightarrow \infty}   \frac{q^{N-1} C_{N-1}(q)}{C_{N}(q)} = 1
\end{equation}
Now what we need to show is that the second term in Eq. \ref{eq:decompose} is negligable comparing to $C_N(q)$.
For that purpose, we need to estimate a bound for the partition function restriced on Dyck paths with exactly $m$ returns to zero $C_{N,m}$. For $m=1$ this simply corresponds to the elevated Dyck paths: $\hat{C}_N(q) := C_{N,1}(q)$.
Again $\hat{C}_N(q)$ behaves as $q^{\binom{N}{2}}$ up to a constant that depends only on $q$
More conveniently, for $N$ large enough, there exists a function $K(q)$ such that:
$$\hat{C}_N(q) \leq K(q) q^{\binom{N}{2}}.$$
We can see a Dyck path that returns to zero at points  $0=r_{0}<r_{1}<\dots<r_{m}=2N$ as a concatination of elevated Dyck paths of semi lengths $k_1, k_2,...,k_m$, which allows to write $C_{N,m}$ as:
\begin{equation}
    C_{N,m}(q) = \sum_{k_1+..+k_m = N}  \hat{C}_{k_1}(q)... \hat{C}_{k_m}(q)
    \leq
     \sum_{k_1+..+k_m = N}  K(q)^m
     q^{\binom{k_1}{2}}...q^{\binom{k_m}{2}}
\end{equation}
To adresss the behaviour the second term of Eq. \ref{eq:decompose}, we need to provide an estimation for the ratio:
\begin{equation}
\frac{C_{N,m}(q) }{C_{N}(q) }
\leq
K(q)^m
\binom{N-1}{m-1}
q^{-(m-1)(N-m/2)}
\lesssim N^m q^{-(m-1)(N-m/2)}
\end{equation}
Now we can bound  the second term in Eq. \ref{eq:decompose}:
\begin{align}
   \sum_{m=2}^{N}s^{m}
    \frac{C_{N,m}(q) }{C_{N}(q)}
    \lesssim &
    \sum_{m=2}^{N} s^{m} N^m q^{-(m-1)N + m^2/2}
        =
    q^{N}\sum_{m=2}^{N} (s N q^{-N})^m q^{m^2/2} \\
&=  q^{N-N^2/2} 
    \sum_{m=2}^{N} (s N)^m q^{(N-m)^2/2}  =
  q^{N-N^2/2}  (s N)^N
    \sum_{k=0}^{N-2} (s N)^{-k} q^{k^2/2}
\end{align}

Checking the behaviour of the ratio of two consequtive terms of sum in the previous expression, one can verifies that this series is dominated by the last terms
\begin{equation}
   \sum_{k=0}^{N-2} (s N)^{-k} q^{k^2/2}
=
(sN)^{-(N-2)}\,q^{\frac{(N-2)^2}{2}}\,(1+o(1)),
\qquad (N\to\infty). 
\end{equation}
Which finally leads to the desired estimation:
\begin{equation}
   \sum_{m=2}^{N}s^{m}
    \frac{C_{N,m}(q) }{C_{N}(q)}
    \lesssim
(sN)^{2}\,q^{ 2 - N}\,(1+o(1))
\end{equation}
This shows that this terms converge to zero as $N$ goes to infinity, which proves the proposition.
\end{proof}

\begin{corollary}
Using the Asymptotics of $C_N(q)$ for $q>1$ \cite{furlinger1985q, flajolet2009analytic},
The behaviour of  $q\!-\!s$  Catalan numbers  is given by:
    \begin{equation}
G_{N}(q,s)=\frac{s}{\phi(1/q)} q^{\binom{N}{2}} \bigl(1+o(1)\bigr),\quad q > 1, \quad N\to\infty
\end{equation}
with $\phi$ is the 
q-Pochhammer Euler function
$\phi(u) = \prod_{i=1}^{\infty} (1 - u^i)$
\label{cor1}
\end{corollary}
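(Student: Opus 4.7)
The plan is to combine Proposition~\ref{prop1} with the classical asymptotics of the Carlitz--Riordan $q$-Catalan numbers in the regime $q>1$, established by F\"urlinger--Hofbauer \cite{furlinger1985q} and also recoverable by the transfer machinery of \cite{flajolet2009analytic}. For $q>1$ these read
$$C_N(q) \;=\; \frac{1}{\phi(1/q)}\, q^{\binom{N}{2}}\bigl(1+o(1)\bigr), \qquad N\to\infty,$$
with $\phi(u)=\prod_{i\ge 1}(1-u^i)$ the Euler function. Since $q>1$ we have $1/q\in(0,1)$, so the infinite product converges absolutely to a strictly positive constant, and the prefactor is well defined.

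Concretely, I would first state (and quote) the above asymptotic for $C_N(q)$, giving the brief combinatorial heuristic that justifies both the leading power and the Euler prefactor: the unique triangular Dyck path attains the maximal area $\binom{N}{2}$, and paths with an area deficit $\Delta$ contribute a correction weighted by $q^{-\Delta}$; summing over all admissible deficits (parametrised by the row-removal statistics of Young diagrams under the staircase) produces the reciprocal Euler product $\prod_{i\ge 1}(1-q^{-i})^{-1}=1/\phi(1/q)$. This is the point at which a careful reader should verify that the normalisation of $\phi$ matches the one used in \cite{furlinger1985q} (some references carry an extra $q$-integer factor). Once this is settled, the classical statement is fixed.

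The corollary then follows by a single substitution. Proposition~\ref{prop1} gives $G_N(q,s)/C_N(q)\to s$ as $N\to\infty$, so
$$G_N(q,s)\;=\;C_N(q)\,\bigl(s+o(1)\bigr)\;=\;\frac{s}{\phi(1/q)}\,q^{\binom{N}{2}}\bigl(1+o(1)\bigr),$$
which is precisely the claim. There is no genuine obstacle here: the statement is formally a one-line consequence of Proposition~\ref{prop1} together with the known $q$-Catalan asymptotics, and the only point requiring attention is matching conventions for the Euler function and the normalisation of $C_N(q)$ between the cited references and the notation used in this paper.
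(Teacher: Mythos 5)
Your proposal is correct and is essentially the paper's own argument: the corollary is obtained exactly by combining Proposition~\ref{prop1}, which gives $G_N(q,s)/C_N(q)\to s$, with the cited asymptotics $C_N(q)\sim q^{\binom{N}{2}}/\phi(1/q)$ for $q>1$. Your added heuristic for the Euler prefactor and the caution about normalisation conventions are fine but not a different route.
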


\subsection{Asymptotic behaviour of the partition function}
The various thermodynamic quantities rely on estimating the asymptotic behaviour of the partition function for large $N$. Combining the recurrence \ref{rec} and lemma \ref{Z_in_terms_of_G}, one can have simple computational access to $Z_N(q)$, allowing for numerical checks. Analytically, the behaviour of $Z_N(q)$ can be computed using results of the previous section and will be critically dependent on $q$:

\noindent
$\boxed{q > 1}$ Using proposition \ref{prop1} and lemma \ref{Z_in_terms_of_G}, we obtain for $q>1$:
\begin{equation}
    \lim_{N \rightarrow \infty} \frac{Z_N(q)}{n C_N(q)} = 1
\end{equation}
Which leads to 
\begin{equation}
    Z_N(q) \sim \frac{N}{\phi(1/q)}
    q^{\binom{N}{2}}, 
    \quad
    \phi(u)= \prod_{i=1}^{\infty} (1-u^{i})
    \label{asympZN}
\end{equation}

It's useful to note that as $q$ approaches $1$, the coefficient diverges, and it has an asymptotic behavior:

\begin{equation}
    1/\phi(1/q) = \sqrt{\frac{2\pi}{t}} \exp\left(-\frac{\pi^2}{6 t} + \frac{t}{24}\right) + o(1), \quad t = \ln(q), \quad q \rightarrow 1^+
     \label{asympEuler}
\end{equation}

This is based on a known asymptotic of the q-pochhammer function $\phi$ by Watson \cite{watson1936final}. With the additional second term inside the exponential, the formula is an excellent numerical approximation for finite values of $q$, that are not necessarily close to unity.

\noindent
$\boxed{q = 1}$ We have: $Z_N(1) = \frac{1}{2} \binom{2N}{N}$, which yields:
\begin{equation}
    Z_N(1) =  \frac{n+1}{2}C_N(1) \sim {\frac {4^{N}}{{2\sqrt {\pi N }}}}
\end{equation}

\noindent
$\boxed{q < 1}$ We show that in this case, the partition function has the same growth rate as $C_N(q)$ for $q<1$ but with a distinct prefactor.

\begin{proposition}\label{prop:Zn_q<1}
Let $q \in (0,1)$ and denote by $\xi(q)$ the largest real positive zero of
$F(q,1/z)$, where $F$ is defined in~\eqref{F}. The partition function for a single–parity sector is given by:
\begin{equation}
   Z_N(q)=\xi(q)^N \,\bigl[1+o(1)\bigr],
   \qquad N\to\infty,
   \label{eq:Zn_asymp}
\end{equation}
with an {\it exact} prefactor~$1$.
\end{proposition}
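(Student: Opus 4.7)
The plan is to derive a closed form for the ordinary generating function $G(q, z) := \sum_{N \geq 1} Z_N(q)\, z^N$ and then extract the leading asymptotics via residue analysis. The key observation is that the integral over $s$ in Lemma \ref{lemma_two} collapses to a clean logarithm once the hypergeometric representation of Lemma \ref{Ram} is substituted in. A direct manipulation yields
\begin{equation*}
\frac{\Omega(q,s,z) - 1}{s} = \frac{F(q, qz) - F(q, z)}{s F(q, z) + (1-s) F(q, qz)} = -\frac{u'(s)}{u(s)},
\end{equation*}
where $u(s) := F(q, qz) + s[F(q, z) - F(q, qz)]$ is affine in $s$. Integrating on $[0, 1]$ gives $\int_0^1 (\Omega - 1)/s\, ds = \ln[F(q, qz)/F(q, z)]$. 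Since $G_N(q, s)$ is divisible by $s$ for $N \geq 1$, the interchange of sum and integral against Lemma \ref{lemma_two} is legitimate and produces $\sum_{N \geq 1} (Z_N/N)\, z^N = \ln[F(q, qz)/F(q, z)]$; differentiating and multiplying by $z$ yields the compact generating function
\begin{equation*}
G(q, z) = z \left[\frac{q F'(q, qz)}{F(q, qz)} - \frac{F'(q, z)}{F(q, z)}\right].
\end{equation*}

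For $q \in (0, 1)$, $F(q, \cdot)$ is entire, so $G(q, z)$ is meromorphic. Its poles can only lie at zeros of $F(q, z)$ (smallest positive one at $z_\star := 1/\xi(q)$, by definition of $\xi$) or of $F(q, qz)$ (smallest positive one at $1/(q\xi(q)) > z_\star$). Hence $z_\star$ is the dominant positive pole, and it is simple because the strict monotonicity argument used earlier in Section \ref{sec:qs-catalan} (specialized to $s=1$, where $\Phi(z)=F(q,z)$) already shows that $F(q,\cdot)$ has a unique, simple positive real zero. At $z_\star$ the first bracket is regular ($q z_\star \ne z_\star$), so $G$ admits the local expansion
\begin{equation*}
G(q, z) = \frac{1}{1 - \xi(q)\, z} + H(q, z),
\end{equation*}
with $H$ analytic in a disk of radius strictly greater than $z_\star$. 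The residue $1$ at the pole is the clean consequence of the fact that a logarithmic derivative has residue exactly one at any simple zero. Coefficient extraction then yields $Z_N(q) = \xi(q)^N + [z^N] H(q, z) = \xi(q)^N (1 + o(1))$, with the prefactor exactly $1$.

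The main technical obstacle is justifying the last step, i.e.\ that $H$ is really analytic in a disk of radius strictly greater than $z_\star$, equivalently that no other zero of $F(q, z)$ lies on the circle $|z| = z_\star$. Positivity of $Z_N$ together with Pringsheim's theorem already pins the radius of convergence of $G$ at $z_\star$ and rules out singularities strictly inside that disk; what remains is to exclude further zeros of modulus exactly $z_\star$. This can be handled either by invoking the classical fact that, for $0 < q < 1$, all zeros of the entire $q$-hypergeometric function $F(q, \cdot)$ are positive real and simple, or by adapting the transfer-theorem style argument already used in the paper to analyse $G_N(q, s)$ in the subcritical regime, where control of the same type of denominator off the positive real axis was the essential input.
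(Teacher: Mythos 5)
Your proof is correct, and it takes a genuinely different route from the paper. The paper keeps the $s$-integral of Lemma~\ref{lemma_two} and evaluates it asymptotically: it inserts the fixed-$s$ pole asymptotics $G_N(q,s)\sim B(q,s)\rho(q,s)^{-N}$, zooms in on $s=1$ via the scaling $s=1-\zeta/N$, applies a Watson/Laplace lemma, and only at the very end collapses the prefactor $B_0/\lambda$ to $1$ by the $q$-difference identity $zF(q,q^{2}z)=F(q,qz)-F(q,z)$. You instead perform the $s$-integral \emph{exactly}: since $(\Omega(q,s,z)-1)/s$ is the negative logarithmic derivative in $s$ of the affine denominator, Lemma~\ref{lemma_two} yields the closed form
\begin{equation*}
\sum_{N\ge 1}Z_N(q)\,z^N \;=\; z\,\frac{\mathrm d}{\mathrm dz}\,
\ln\!\frac{F(q,qz)}{F(q,z)},
\end{equation*}
and the exact prefactor $1$ then comes for free, as the residue of a logarithmic derivative at a simple zero — no Laplace analysis, no uniformity-in-$s$ estimates, and no separate identity needed to simplify the constant. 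What your route requires instead is (i) the elementary interchange of sum and integral (harmless, positive coefficients), and (ii) the fact that the dominant zero $\rho_0=1/\xi(q)$ of $F(q,\cdot)$ is simple and is the \emph{only} zero on the circle $\abs{z}=\rho_0$. You flag (ii) correctly; I would lean on your first option, the classical theorem that all zeros of this entire function (Ramanujan's $q$-Airy function, $F(q,z)=A_q(z/q)$) are positive, real and simple, rather than on the monotonicity argument of Sec.~\ref{sec:qs-catalan}: as stated there, $\Phi'<0$ for \emph{all} $z>0$ cannot hold globally (the function oscillates and has infinitely many positive zeros), though it does give what is needed up to the first zero. With that citation supplied, your argument is complete, arguably cleaner than the paper's, and has the added benefit of producing an exact generating function for $Z_N(q)$ with exponentially small error terms rather than just the leading asymptotics.
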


\begin{proof}
Set $\rho_0 = 1/\xi$ the smallest positive real zero of $F(q,z)$.
For fixed $s \in (0,1]$ the denominator of
$\Omega(q,s,z)$ has a simple pole at $z=\rho(q,s)$, and we previously showed that for $q<1$:
\[
   G_N(q,s)=B(q,s)\,\rho(q,s)^{-N}\,[1+o(1)],
\]
where $B(q,s)=F(q,q\rho)/[-\rho\,\partial_z \Phi |_{z=\rho}]$ and
$\Phi(z,s)=sF(q,z)+(1-s)F(q,qz)$.  Since $\rho(q,s)$ decreases strictly in
$s$ and $\rho(q,s)\searrow \rho_0$ as $s\nearrow1$, the integrand
$N\,G_N(q,s)/s$ in~\eqref{Z_in_terms_of_G} is maximised near $s=1$.  
Write $s=1-\zeta/N$; then
\[
   Z_N(q)=
   \int_{0}^{N}
     B\!\Bigl(1-\tfrac{\zeta}{N}\Bigr)\,
     \rho\!\Bigl(1-\tfrac{\zeta}{N}\Bigr)^{-N}\,
     \Bigl(1-\tfrac{\zeta}{N}\Bigr)^{-1} d\zeta .
\]
Expanding $\rho$ at $s=1$ gives
$\rho(1-\zeta/N)^{-N}=\rho_0^{-N}\exp\{- \zeta\,\lambda\}[1+O(N^{-1})]$ with
$\lambda=-\rho'(1)/\rho_0>0$.  Keeping the leading terms and applying
Watson’s one-sided Laplace lemma \cite{de2014asymptotic,wong2001asymptotic} yields:
\begin{equation}
    Z_N(q)=\frac{B_0}{\lambda}\rho_0^{-N}[1+o(1)], \qquad B_0 := \frac{-F(q,q\rho_0)}{\rho_0F'(q,\rho_0)}
\end{equation}

Finally, differentiating $\Phi(\rho(q,s),s)=0$ at $s=1$ shows
$\rho'(1)=F(q,q\rho_0)/F'(q,\rho_0)$, so $B_0/\lambda=
\rho_0\,F(q,q^{2}\rho_0)/F(q,q\rho_0)$.  
The $q$-difference identity
$zF(q,q^{2}z)=F(q,qz)-F(q,z)$, evaluated at $z=\rho_0$, gives
$\rho_0F(q,q^{2}\rho_0)=F(q,q\rho_0)$; hence the prefactor equals~$1$.
\end{proof}

\begin{remark}
As $q$ increases from $0$ to $1$ the root $\rho_0(q)$ decreases from $1$
to $1/4$, interpolating smoothly between $Z_N(0)=O(1)$ for strong backward bias
and $Z_N(1)=\frac12\binom{2N}{N}\sim4^{N}/(2\sqrt{\pi N})$ for the no bias symmetric regime.
\end{remark}

\noindent
\textbf{Numerical Checks:}
Since the space of configurations grows exponentially with the system size, a brute force enumeration of the partition function is accessible only for very small systems. However, the recurrence relation \ref{rec} allows for easy access to the $q\!-\!s$ Catalan numbers at a cost that is of the order of $N^2$ (which can be refined to $N \ln (N)$ using fast Fourier transform).
Numerical integration of \ref{Z_in_terms_of_G} allows then to have access to $Z_N$. In Fig. \ref{fig:beta}. $Z_N$ is numerically compared with the asymptotic predictions previously derived. Note that asymptotics rely on extracting numerically the growth rate $\xi(q)$ that is the largest real root of the equation:
\begin{equation}\label{betaq}
    \sum_{k=0}^{\infty} \frac{(-1)^k q^{2 k^2-k}}{(q;q)_k (\xi(q))^k} = 0 
\end{equation}
Since the terms of this series decay super-exponentially, the first few terms yield very good numerical approximations. 
\begin{figure}
  \centering
\includegraphics[width=0.48\linewidth]{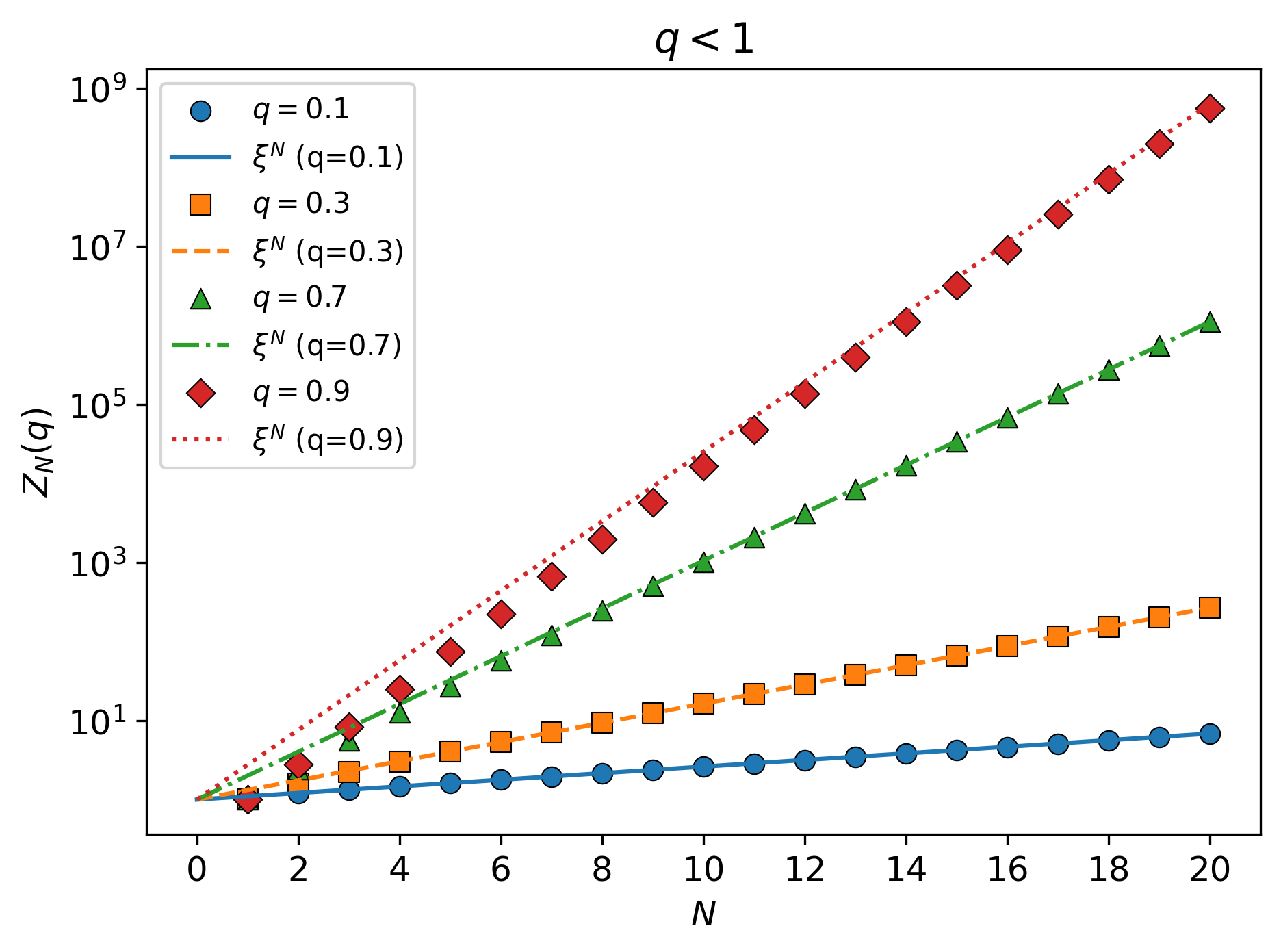}\hfill
\includegraphics[width=0.48\linewidth]{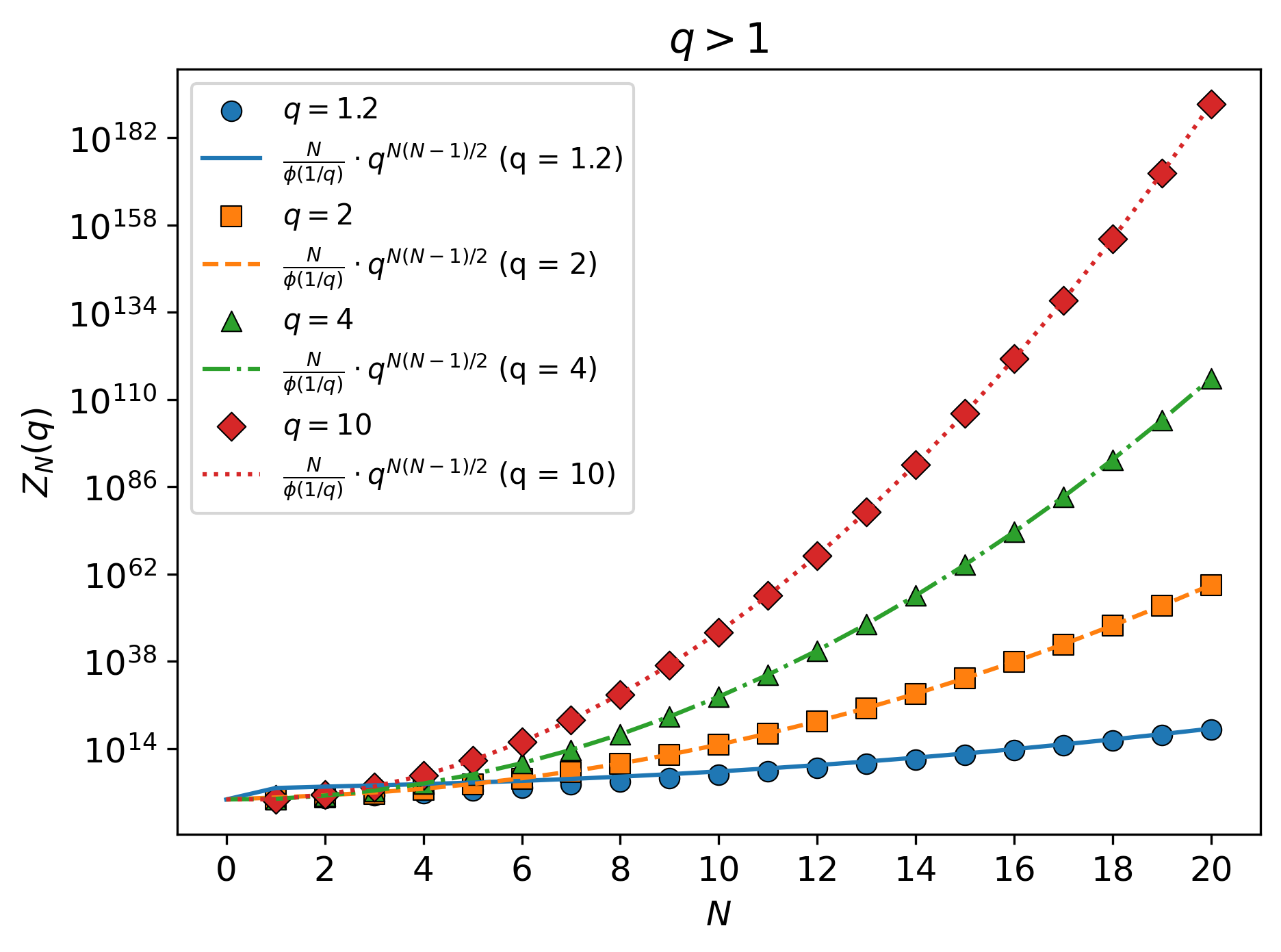}
  \caption{Asymptotic behaviour of $Z_N(q)$ illustrating that the free energy is extensive for $q<1$ (left) ans super extensive for $q>1$ (right). The dashed lines represent the theoretical asymptotic curves while the shapes represent the exact values computed numerically}
  \label{fig:beta}
\end{figure}

\section{Mixed-Order Phase Transition and related thermodynamic Observables}
\label{sec:mixed}
\subsubsection{Free energy and order parameter}
The free energy is defined as $F_N = \frac{1}{\ln(q)} \ln(Z_N)$ (the inverse temperature here would be $-\ln(q)$)
From the previous section, we have:
\begin{equation}
    F_N \sim 
        \begin{cases}
      \frac{1}{2} N^2 & \text{if } q > 1\\
       \frac{ \ln(\xi(q))}{\ln(q)} N & \text{if } q < 1
    \end{cases}
\end{equation}
The first surprising observation is that the free energy is not extensive for $q>1$. This is a breakdown of additive thermodynamics where the energy of two subsystems is supposed to be the sum of the energy of the parts. This is a signature of long-range correlations: In a fully connected graph (the simplest example is the fully connected Ising model) the number of interacting pairs would be of the order of $N^2$.

The previous scaling suggests defining
an order parameter $ m(q) = \lim_{N \rightarrow \infty}
    \frac{F_N}{N^2}$

\begin{equation}
m(q) =
        \begin{cases}
       \frac{1}{2}  & \text{if } q > 1\\
        0 & \text{if } q < 1
    \end{cases}
\end{equation}

The jump in the order parameter is characteristic of a first-order phase transition. However, to dig into the actual nature of the transition, we need to look at the behavior of the derivatives of the free energy.
\subsubsection{Expectation and Variance of the area}
The expectation and variance of the area with respect to the invariant measure \ref{invmeasure} are respectively given by:
\begin{equation}
\langle A_N(q) \rangle  = q \frac{d}{dq} \ln(Z_N(q)), \quad
\operatorname{Var}[A_N(q)]
   \;=\;
   q\,\frac{d}{dq}\!\Bigl(q\,\frac{d}{dq}\ln Z_{N}(q)\Bigr).
\end{equation}

In the thermodynamic language, these quantities are the internal energy and susceptibility respectively.
for $q=1$, these quantities can be computed exactly. for that purpose one needs to compute the first and second derivatives of the partition function at $q=1$ $\dot{Z}_N(1)$, $\ddot{Z}_N(1)$, which is done in the Appendices A and D. These computations give:

\begin{equation}
\langle A_N(1) \rangle =
    \frac{N}{2} \left(\frac{\sqrt{\pi} N! }{\Gamma\left(\frac{1}{2} + n\right)} - 2\right), 
    \quad
    \operatorname{Var}[A_N(1)] = 
    \frac{1}{12} N^2 \left(2 + 10n - \frac{3 \pi \Gamma(1 + N)^2}{\Gamma\left(\frac{1}{2} + n\right)^2}\right)
    \label{exp_var_q_one}
\end{equation}
One can easily extract the asymptotic behavior of these formulas. Additionally, one can derive the asymptotics of $A_N$ and $\operatorname{Var}[A_N]$ in case $q\neq 1$ from those of $Z_N$ derived in the previous section. These derivations give:
\begin{equation}
\langle A_N(q) \rangle
    \sim 
        \begin{cases}
      \frac{1}{2} N^2 & \text{if } q > 1\\
       \frac{\sqrt{\pi}}{2}N^{\frac{3}{2}}  & \text{if } q = 1\\
        \frac{q\xi'(q)}{\xi(q)} N & \text{if } q < 1
    \end{cases},
    \quad
    \operatorname{Var}[A_N(q)]
    \sim 
        \begin{cases}
      \frac{2 \pi^2 - 3\ln(q)}{6(\ln(q))^3} & \text{if } q > 1\\
       (\frac{5}{6} - \frac{\pi}{4})N^3   & \text{if } q = 1 \\
       \kappa(q) N & \text{if } q < 1
    \end{cases}
\end{equation}

With
\begin{equation}
  \kappa(q) = 
\frac{q \left( -q \, \xi'(q)^2 + \xi(q) \left( \xi'(q) + q \, \xi''(q) \right) \right)}{\xi(q)^2} 
\end{equation}

\paragraph*{Physical interpretation.}
The asymptotic behaviour of the three observables
($F_N$, $\langle A_N\rangle$, $\operatorname{Var}[A_N]$)
identifies two distinct equilibrium phases separated by a mixed–order
transition at the critical asymmetry \(q_c=1\).
\begin{itemize}
\item \textbf{Condensed (jammed) phase, \(\boldsymbol{q>1}\).}\\[2pt]
  The mean area reaches the maximal value
  \(\langle A_N\rangle\sim \tfrac12 N^{2}\),
  while the variance stays bounded,
  \(\operatorname{Var}[A_N]\sim
    \tfrac{2\pi^{2}-3\ln q}{6(\ln q)^{3}}\).
  Hence almost all \(N\) particles merge into a single compact cluster
  of length \(\Theta(N)\); only its rear edge fluctuates.
  The free energy grows super-extensively,
  \(F_N\sim\tfrac12 N^{2}\), signalling the breakdown of additive
  thermodynamics typical of long-range interacting (“fully-connected’’)
  systems and a spontaneous breaking of translational invariance accompanied by weak ergodicity breaking: although all fully clustered positions are equiprobable in the steady state, the time needed to dismantle a macroscopic cluster and re-form it elsewhere scales as \(\sim q^{N^{2}}\) (i.e., super-exponential in \(N\)). Hence, on any practical time scale the system remains trapped near one cluster location; see also \cite{grosskinsky2025long} for a discussion of this metastability.
\item \textbf{Fluid phase, \(\boldsymbol{q<1}\).}\\[2pt]
The additivity of thermodynamics is restored. The mean area is linear in the system size, indicating that the typical configuration is antiferromagnetic. Thermal fluctuations around this state are short-ranged, as expected for equilibrium systems with finite-range interactions away from criticality. Clusters remain microscopic, of order \(O(1)\), in direct analogy with the facilitated-exclusion limit \(q=0\). The free-energy density \(F_N/N\) is analytic, as in an ordinary short-range lattice gas.

\item \textbf{Critical point, \(\boldsymbol{q=1}\).}\\[2pt]
  The order parameter
  \(m(q)=\displaystyle\lim_{N\to\infty}F_N/N^{2}\)
  jumps from \(0\) to \(\tfrac12\) (first-order signature),
  yet the susceptibility diverges:
  \(\operatorname{Var}[A_N(1)]\sim
    (\tfrac56-\tfrac\pi4)\,N^{3}\).
  The mean area scales as
  \(\langle A_N(1)\rangle\sim
    \tfrac{\sqrt\pi}{2}\,N^{3/2}\),
  reproducing the Brownian-excursion law for Dyck paths, with an average hight of order $\sqrt{N}$.
  Together with the non-extensive correction
  \(F_N(1)\propto N^{3/2}\),
  these facts establish a mixed-order transition.
\end{itemize}
Mixed-order phase transitions are well documented \cite{bar2014mixed,korbel2025microscopic} and have also appeared in kinetically constrained lattice gases and jamming settings \cite{toninelli2006jamming,toninelli2007jamming,gross1985mean,schwarz2006onset}. The present transition, however, is unusual: it occurs in one dimension, at equilibrium, is tuned by a hopping-rate asymmetry rather than by density, and the “frozen’’ configurations are only {\it metastably} jammed (not strictly frozen in finite time). The transition likewise shares qualitative features with a glass transition \cite{ritort2003glassy}: a quench from \(q<1\) to \(q>1\) produces a long-lived metastable clustered state whose relaxation to the true stationary state is exceedingly slow and displays aging. In this analogy, \(q=1\) plays the role of a melting point separating fluid-like and cluster-dominated regimes. Finally, to the best of our knowledge, our model exhibits a trajectory-space active–inactive phenomenology closely related to that reported for stochastic Fredkin chains \cite{causer2022slow}.
\section{Weakly asymmetric constrained exclusion process}
\label{sec:weakly}
To probe the behavior near the transition point $q=1$, we let the asymmetry depend on the system size,
\begin{equation}
    q_N \;=\; \exp\!\bigl(b/N^{\alpha}\bigr),
\end{equation}
with real parameters $b$ and $\alpha$. When $\alpha=0$ this reduces to a strong, $N$-independent asymmetry. In the regime $\alpha>0$ we do not have full analytic control of thermodynamic quantities and will rely on heuristics supported by numerics. The asymptotics derived in the previous section apply for fixed $q$; a natural question is for which $(b,\alpha)$ those formulas remain valid under the scaling above.

We restrict to $q>1$ ($b>0$), where our approach seems to provide reliable results. Inserting $q_N$ into the fixed-$q$ asymptotics of $Z_N$ from Eq.~\eqref{asympZN} yields
\begin{equation}
    Z_{q_N}\;\sim\; \frac{N}{\phi(1/q_N)}\, q_N^{\binom{N}{2}}, 
    \qquad b>0.
    \label{asym_Weak}
\end{equation}
We expect \eqref{asym_Weak} to remain valid for small $\alpha$, up to some threshold $\alpha<\alpha^\star$. Conversely, for sufficiently large $\alpha$ the system should behave as at $b=0$, starting beyond a second threshold $\alpha>\tilde\alpha$. Our numerics indicate a single crossover at
\[
    \alpha^\star=\tilde\alpha=1.
\]
In particular, \eqref{asym_Weak} appears accurate up to $\alpha\lesssim 1$, with a distinct behavior for $\alpha>1$. We conjecture closed-form expressions including prefactors.

\paragraph*{Thermodynamic quantities in the weakly asymmetric regime.}
Write $Z_N(b,\alpha):=Z_N(q_N)$. Since $\ln q_N=b/N^{\alpha}$, we have
\[
    N^{\alpha}\,\frac{\mathrm d}{\mathrm db}
    \;=\;
    \frac{\mathrm d}{\mathrm d(\ln q)} 
    \;=\;
    q\,\frac{\mathrm d}{\mathrm dq}.
\]
With this convention, we define
\begin{equation}
    F_N(b,\alpha)=\frac{N^{\alpha}}{b}\ln Z_N(b,\alpha), 
    \quad 
A_N(b,\alpha)=N^{\alpha}\,\frac{\mathrm d}{\mathrm db}\ln Z_N(b,\alpha),
    \quad
V_N(b,\alpha)=N^{2\alpha}\,\frac{\mathrm d^2}{\mathrm db^2}\ln Z_N(b,\alpha).
\end{equation}
Using Eq.~\eqref{asympEuler} to evaluate \eqref{asym_Weak}, we are led to the following asymptotic behavior:
\begin{equation}
\label{eq:Weakly}
\begin{aligned}
F_N(b,\alpha) &\sim 
\begin{cases}
\frac{1}{2}N^2, & \text{if }\alpha<1,\\
\frac{\ln 4}{b}\,N^{1+\alpha}, & \text{if }\alpha\ge 1,
\end{cases}\\[0.6ex]
A_N(b,\alpha) &\sim 
\begin{cases}
\frac{1}{2}N^2, & \text{if }\alpha<1,\\
g(b)\,N^2, & \text{if }\alpha=1,\\
\frac{\sqrt{\pi}}{2}\,N^{3/2}, & \text{if }\alpha>1,
\end{cases}
\qquad
V_N(b,\alpha) &\sim 
\begin{cases}
\frac{2\pi^2-3b}{6b^3}, & \text{if }\alpha=0,\\
\frac{\pi^2}{3b^3}\,N^{3\alpha}, & \text{if }0<\alpha<1,\\
\left(\frac{5}{6}-\frac{\pi}{4}\right)N^3, & \text{if }\alpha\ge 1.
\end{cases}
\end{aligned}
\end{equation}

Here $g(b)$ is an $O(1)$ prefactor at the marginal scaling $\alpha=1$; for large $b$ we obtain
\[
    g(b)\sim \tfrac{1}{2}-\frac{\pi^2}{6\,b^2}.
\]
These predictions have been checked against numerical simulations, testing both scaling exponents and prefactors. Fig. \ref{weaklyFree},\ref{weaklyArea} and \ref{weaklyVar}. The agreement is generally good, with visible deviations at small $b$—most notably for $\alpha>1$ in $A_N$ and $V_N$. We attribute these discrepancies to finite-size effects and subleading terms from the $\phi(\cdot)$ expansion; a finer analysis is needed to resolve them.

\begin{figure}[h]
    \centering
  \begin{minipage}{0.48\textwidth}
    \centering
    \makebox[\linewidth]{\small $\alpha<1$}
    \vspace{0.4em}
    \includegraphics[width=0.8\linewidth]{"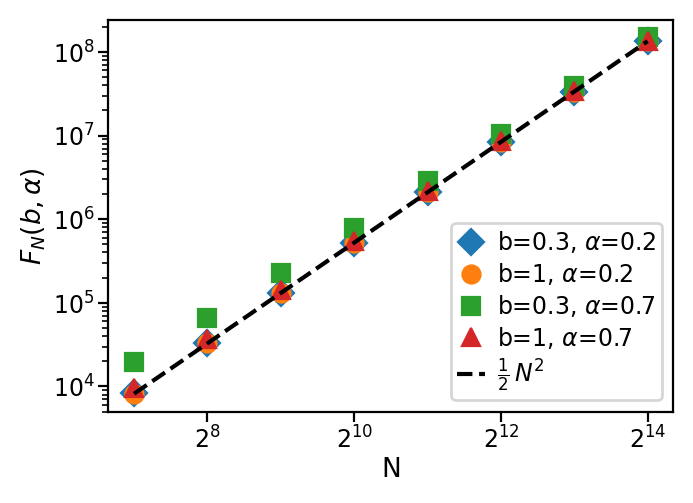"}
  \end{minipage}\hfill
  \begin{minipage}{0.48\textwidth}
    \centering
    \makebox[\linewidth]{\small $\alpha\geq 1$}
    \vspace{0.4em}
    \includegraphics[width=0.8\linewidth]{"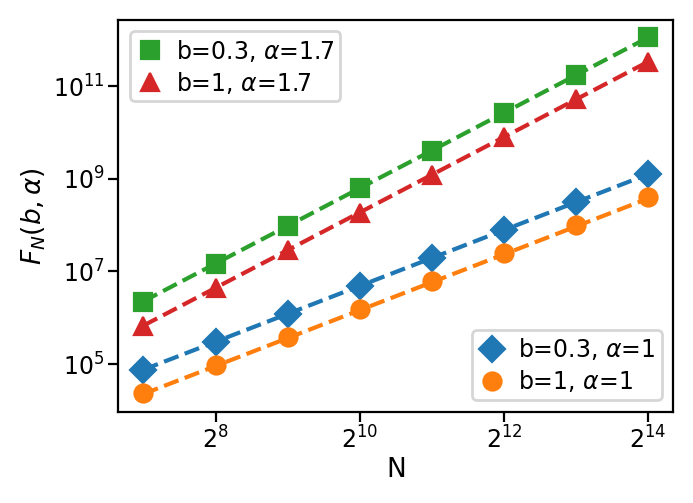"}
  \end{minipage}
    \caption{Weakly asymmetric regime: free-energy scaling on log–log axes. Dashed lines show the conjectured asymptotics from Eq.~\eqref{eq:Weakly}, while markers denote numerical data. For $\alpha<1$ , the growth expoend and prefector are constants for $\alpha\ge 1$, the growth exponent depends on $\alpha$ and the prefactor is a function of $b$.}
    \label{weaklyFree}
\end{figure}

\begin{figure}[h]
    \centering
  \centering
  \begin{minipage}{0.32\textwidth}
    \centering
    \makebox[\linewidth]{\small$\alpha<1$}
    \vspace{0.4em}
    \includegraphics[width=\linewidth]{"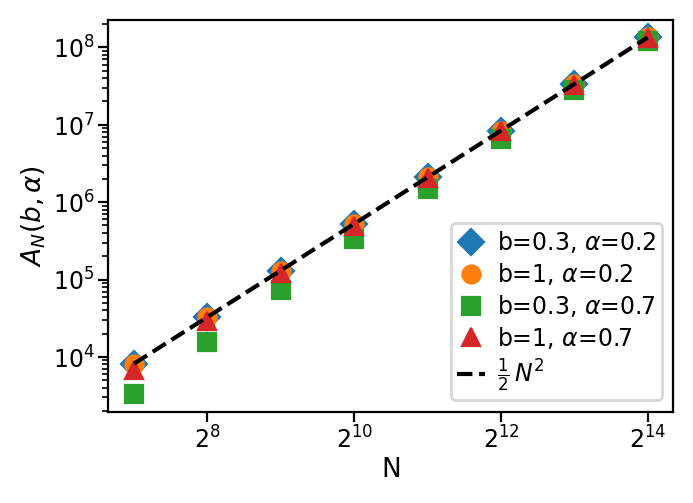"}
  \end{minipage}\hfill
  \begin{minipage}{0.32\textwidth}
    \centering
    \makebox[\linewidth]{\small $\alpha=1$}
    \vspace{0.4em}
    \includegraphics[width=\linewidth]{"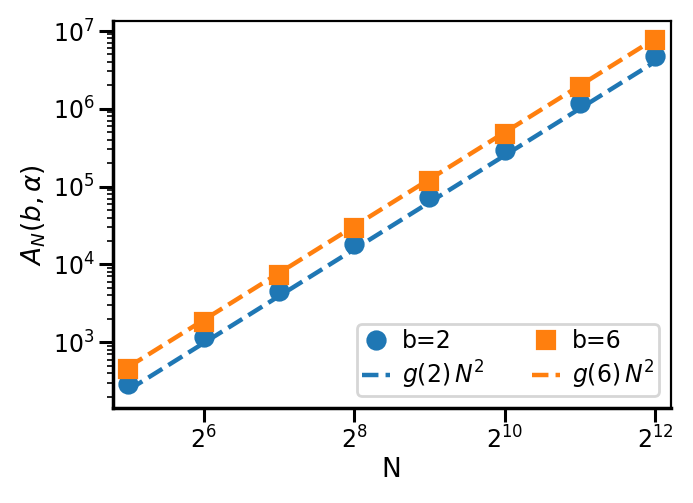"}
  \end{minipage}\hfill
  \begin{minipage}{0.32\textwidth}
    \centering
    \makebox[\linewidth]{\small $\alpha>1$}
    \vspace{0.4em}
    \includegraphics[width=\linewidth]{"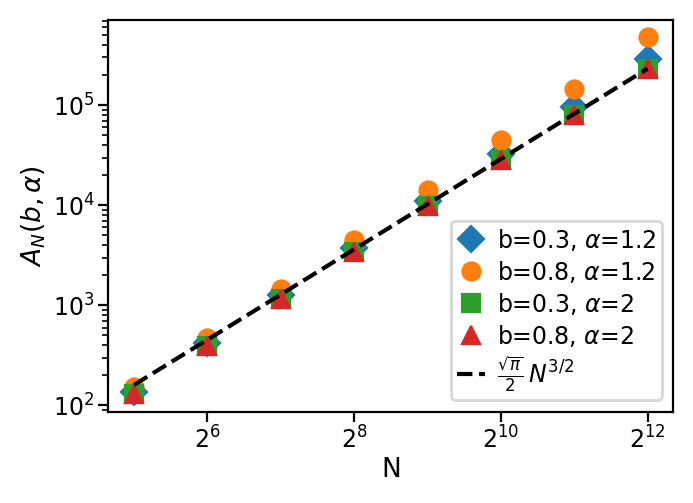"}
  \end{minipage}
    \caption{Weakly asymmetric regime: average-area scaling on log–log axes. Dashed lines show the conjectured asymptotics from Eq.~\eqref{eq:Weakly}; markers denote numerical data. For $\alpha<1$, $A_N\sim \tfrac{1}{2}N^2$. For $\alpha=1$, $A_N\sim g(b)\,N^2$ with a $b$-dependent prefactor known asymptotically for large $b$ ($g(b)\sim \tfrac{1}{2}-\pi^2/6b^2$). For $\alpha>1$, $A_N\sim (\sqrt{\pi}/2)\,N^{3/2}$.}
    \label{weaklyArea}
\end{figure}

\begin{figure}[h]
    \centering
  \begin{minipage}{0.32\textwidth}
    \centering
    \makebox[\linewidth]{\small  $\alpha=0$}
    \vspace{0.4em}
    \includegraphics[width=\linewidth]{"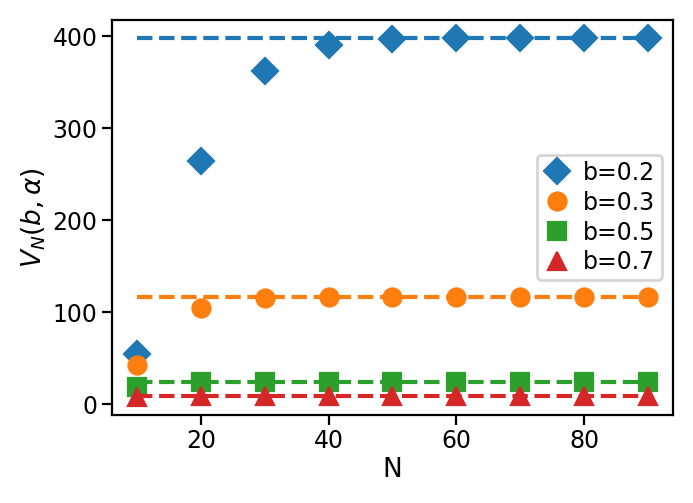"}
  \end{minipage}\hfill
  \begin{minipage}{0.32\textwidth}
    \centering
    \makebox[\linewidth]{\small  $\alpha<1$}
    \vspace{0.4em}
    \includegraphics[width=\linewidth]{"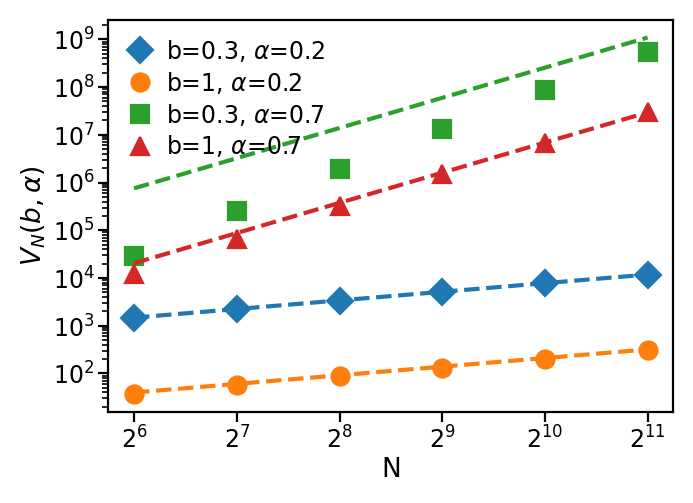"}
  \end{minipage}\hfill
  \begin{minipage}{0.32\textwidth}
    \centering
    \makebox[\linewidth]{\small $\alpha\geq 1$}
    \vspace{0.4em}
    \includegraphics[width=\linewidth]{"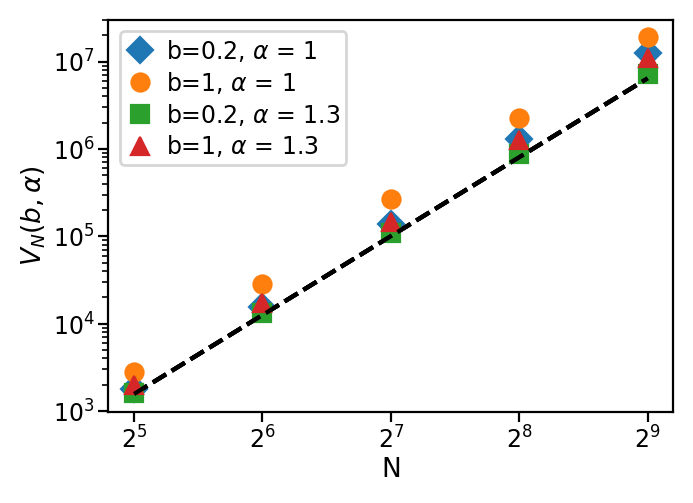"}
  \end{minipage}
    \caption{Weakly asymmetric regime: variance of the area. Dashed lines show the conjectured asymptotics from Eq.~\eqref{eq:Weakly}; markers denote numerical data. The middle and right panels use log–log axes (the left panel, $\alpha=0$, uses linear axes). For $0<\alpha<1$, $V_N\sim (\pi^2/3b^3)\,N^{3\alpha}$: the growth exponent depends on $\alpha$ and the prefactor on $b$. For $\alpha\ge 1$, $V_N\sim \bigl(5/6-\pi/4\bigr)N^3$, independent of both $\alpha$ and $b$; for $\alpha=0$, $V_N\sim (2\pi^2-3b)/(6b^3)$.}
    \label{weaklyVar}
\end{figure}

\section{More than half filling} \label{sec:more}

\begin{figure}[h]
    \centering
\begin{tikzpicture}[yscale=1,xscale=1,scale = 0.6]
\begin{scope}[yshift = -40]
    \draw [thick]
    (0,0) -- (14,0);
	
\foreach \i in {0,...,14}
	{
		\draw (\i,0) -- (\i,0.1) ;
	}
	
\foreach \i in {0,1,3,5,7,9,11,13}
	{
		\node at (\i+0.5,0.5){};
        \draw [very thick,fill] (\i+0.5,0.3) circle (5pt);
        }
\foreach \i in {4,6,8,10,12}
	{
        \draw [thick] (\i+0.5,0.3) circle (5pt);
        }
\draw [thick, red] (2+0.5,0.3) circle (5pt);
\foreach \i in {0,1,13}
	{
		\node at (\i+0.5,0.5){};
        \draw [very thick,red,fill] (\i+0.5,0.3) circle (5pt);
        }
        
\end{scope}
\end{tikzpicture}
    \caption{Introducing an additional particle into a half-filled system creates a frozen segment consisting of three particles and one hole (red). The remaining part of the system forms an active segment (black), corresponding effectively to a half-filled system with closed boundary conditions.}
    \label{fig:more}
\end{figure}
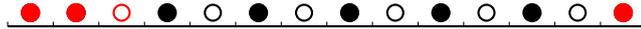

We conclude this work by giving some insights into the dynamics for systems with particle density greater than one-half, consider the following simplified scenario. Starting from a half-filled system prepared in an antiferromagnetic configuration, we introduce an additional particle into one of the empty sites. As illustrated in Fig.~\ref{fig:more}, this insertion creates a frozen segment consisting of three particles and one hole. The remainder of the system, termed the \textit{active segment}, comprises $M = N - 3$ particles occupying a region of $2M$ sites. Notably, the frozen segment effectively imposes closed boundary conditions on the active segment, simplifying its invariant measure compared to the periodic case. Indeed, since the height function for the active segment never becomes negative, the set of allowed configurations maps precisely onto the set of Dyck paths of semi-length $M$.
If \( w \) denotes a configuration within the active segment, its stationary probability is given by
\begin{equation}
\pi(w) = \frac{q^{E(w)}}{C_{M}(q)},
\end{equation}
where the energy \( E(w) \) associated with configuration \( w \) is defined as
\begin{equation} 
E(w) = \frac{1}{2}\sum_{k=1}^{L}\left(h_k(w)-\frac{1}{2}\right),
\end{equation}
and the normalization factor \( C_{M}(q) \) corresponds to the \( q \)-Catalan number:
\begin{equation}
C_{M}(q) = \sum_{w \in D_{M}} q^{E(w)},
\end{equation}
where \( D_{M} \) denotes the set of Dyck paths of semi-length \( M \). 

More generally, a system with particle number \( N > \frac{L}{2} \) consists of multiple alternating frozen and active segments. The maximum number of active segments is determined precisely by the number of extra particles beyond half-filling, given by \( 2N - L - 1 \).

\section{Conclusion}
We studied an exclusion process with a minimal kinetic constraint that profoundly alters the steady-state properties of the system and gives rise to an equilibrium-like phase transition at half filling. 
Building on \cite{grosskinsky2025long}, we focused on this transition and uncovered the combinatorial underpinning of the invariant measure: a natural generalization of $q$-Catalan numbers with a second deformation that counts Dyck-path returns to zero in addition to the area. This structure yields closed expressions for thermodynamic observables (including finite-size corrections) and reveals a mixed-order transition—discontinuous in suitable order parameters yet accompanied by diverging response—akin to phenomena previously reported only in more elaborate settings \cite{bar2014mixed}. Significantly, this transition arises in one dimension
for {\it short-range} interactions, unlike other transitions between clustered
and disordered or hyperuniform states that require long-range interactions
in one dimension \cite{Kare17,Kare22} or more than one conserved type of particles \cite{Evan98b,Lahi00,Chak16}.

Despite this progress, several questions remain open. The most immediate concerns the invariant measure at less than filling, which appears highly non-trivial. In particular, for densities below one half and $q>1$ the stationary current exhibits negative differential mobility (a non-monotonic dependence on $q$), reminiscent of higher-dimensional kinetically constrained lattice gases \cite{benichou2014microscopic,sellitto2008asymmetric}; its appearance in one dimension seems peculiar and, to the best of our knowledge, lacks a general framework.

A second direction is the hydrodynamic limit. A weakly asymmetric scaling $q=\exp(-b/N)$ is natural for deriving a macroscopic equation via non-gradient methods à la Varadhan–Quastel \cite{kipnis2013scaling}.
However, so far, only the limit $q\to 0$ where the process reduces
to the totally asymmetric facilitated exclusion process has been studied \cite{Scho04,Blon20,Blon21}

Third, our model at half filling shares striking similarities with the three-species ABC model at equal densities and cyclic hopping \cite{clincy2003phase}, which also displays a transition between homogeneous and clustered phases. Heuristically, the kinetic constraint here may play a role analogous to the additional species in the ABC dynamics; clarifying this correspondence would be valuable.

Finally, it would be important to delineate the relation to stochastic Fredkin spin chains \cite{causer2022slow}, which involve four-site interactions yet display related phenomenology. At half filling, $q$-Catalan combinatorics again emerge. Establishing precise mappings or shared universality classes would substantially broaden the scope of the present results.

We expect the exact combinatorial framework developed here to serve as both a benchmark for simulations and a springboard for addressing these questions.

\backmatter
\bmhead{Acknowledgements}

We would like to thank Stefan Grosskinsky for useful discussions.
This work is financially supported by the grants
FCT/Portugal through project UIDB/04459/2020 (DOI: 10.54499/UIDP/04459/2020), as well as through grants 2020.03953.CEECIND (DOI: 10.54499/2020.03953.CEECIND/CP1587/CT0013) and 2022.09232.PTDC (DOI: 10.54499/2022.09232.PTDC).
In addition, A.Z. gratefully acknowledges the financial support of the ANR-PRME Uniopen, project (ANR-22-CE30-0004-01).

\begin{appendices}

\section{Computing the  derivatives of the partition function at $q=1$}
\label{derivatives_Z}
In order to compute the expectation and the variance of the area at $q=1$, Eq. \ref{exp_var_q_one}, one needs to extract the first and second derivatives of the partition function at $q=1$. For notation simplicity, denotes theses derivatives $\dot{Z}$ and $\ddot{Z}$ respectively. These are integer sequences that we will extract from their generating functions, whose 
computations will require adapting Lemma \ref{lemma_two} and the q-alebraic relation Eq. \ref{q_fun}, and needs additionally the first two derivatives of $q-$ Catalan numbers at $q=1$, which are computed in Appendix \ref{Derivatives_C}. Derive the q-algebraic relation \ref{q_fun} with respect to $q$:
\begin{equation}
    \partial_q \Omega(q,s,x) =  s x(
    \partial_q \Omega(q,s,x) \Omega(q,1,q x)
    +
    \Omega(q,s,x) \partial_q \Omega(q,1,q x)
    +
    x \Omega(q,s,x) \partial_x  \Omega(q,1,q x)
    )
\end{equation}
Evaluating for $q=1$, we have:
\begin{equation}
\partial_q \Omega(1,s,x) =  s x(
    \partial_q \Omega(1,s,x) \mathcal{G}_{C}(x)
    +
    \Omega(1,s,x) (\mathcal{G}_{\dot{C}}(x)
    +
    x \dot{C}(x) )
    ) 
\end{equation}
Where $\mathcal{G}_{\dot{C}}$ is the generating function of the the derivative of q-Catalan numbers at $q=1$ which is computed in Appendix \ref{Derivatives_C} and given by Eq. \ref{generating_C_dot} and $\mathcal{G}_{C}(x)$ is the generating 
function of the conventional Catalan number defined in Eq. \ref{generating_C}, $\dot{C}(x)$ is its derivative and
 $\Omega(1,s,x)$ is given by Eq. \ref{Omega1sx}. This allows us to express $\partial_q \Omega(1,s,x)$ in terms of known functions:
\begin{equation}
    \partial_q \Omega(1,s,x) = \frac{s x \Omega(1,s,x) \left(\mathcal{G}_{\dot{C}}(x) + x \dot{C}(x) \right)}{1 - s x \mathcal{G}_{C}(x)},
\end{equation}
and can be expressed explicitly as:
\begin{equation}
    \partial_q \Omega(1,s,x) = \frac{2 s x (-1 + \sqrt{1 - 4 x} + 4 x)}{(2 + s (-1 + \sqrt{1 - 4 x}))^2 (1 - 4 x)^{\frac{3}{2}}}.
    \label{qOmega(1,s,x)}
\end{equation}
Denote the generating function of $\dot{Z}_N$ as $\mathcal{G}_{\dot{Z}}$, one can use Eq. \ref{Z_in_terms_of_G} to express it as:
\begin{equation}
    \mathcal{G}_{\dot{Z}}(x) = \int_{0}^{1} \frac{x}{s} \partial_x \partial_q \Omega(1,s,x)
    ds
\end{equation}
Evaluating using Eq. \ref{qOmega(1,s,x)}, we can find an explicit expression
\begin{equation}
     \mathcal{G}_{\dot{Z}}(x) = \frac{(1 - \sqrt{1 - 4x})x}{(1 - 4x)^2}
\end{equation}
An infinite series expansion allows us to retrieve $\dot{Z}_N$
\begin{equation}
    \dot{Z}_N(1) = n \cdot 2^{2N-2} - (2N-1) \cdot \binom{2N-2}{N-1}
    \label{dot{Z}1}
\end{equation}
The first few terms of this sequence are given by:
$$0, 2, 18, 116, 650, 3372, ....$$
This sequence is identifiable by  OEIS number A153338 \cite{OEIS-A015333}.
The first term in Eq. \ref{dot{Z}1} dominates the second, meaning that $\dot{Z}_N(1)$ behaves asymptotically for large $N$ as
\begin{equation}
    \dot{Z}_N(1) \sim n \cdot 4^{N-1}
\end{equation}

\noindent
\paragraph*{Second order derivative:}
Denote $\mathcal{G}_{\ddot{Z}}$ the generating function of the second order derivative at $q=1$.. This generating function can be expressed using \ref{Z_in_terms_of_G} as:
\begin{equation}\label{gen_ddot{Z}}
G_{\ddot{Z}}(x) =
  \int_{0}^{1} \frac{x}{s} \partial_{x}\partial_{qq} \Omega(1,s,x)
    ds
\end{equation}
To compute $\partial_{x}\partial_{qq} \Omega(1,s,x)$ we derive Eq. \ref{q_fun} twice with respect to $q$ and evaluate for $q=1$
\begin{align*}
&2 s x \left( x \partial_{x} \Omega(1, 1, x)+\partial_{q} \Omega(1, 1, x) \right) \partial_{q} \Omega(1, s, x)
+
\left(-1 + s x \Omega(1, 1, x)\right) \partial_{qq} \Omega(1, s, x) \\
&+
s x \Omega(1, s, x) \left( x^2 \partial_{xx}\Omega(1, 1, x) + 2 x \partial_{q x} \Omega(1, 1, x) + \partial_{qq} \Omega(1, 1, x) \right)
= 0
\end{align*}
Which leads to:
\begin{equation}
\partial_{qq} \Omega(1,s,x) = 
    \frac{8 s x \left(\left(1 - 4 x  + 10 x^2\right) + \sqrt{1 - 4 x} 
    \left(-1 + 2 x + s (-1 + -2x + 3 x^2)\right)
    \right)}{\left(2 + s \left(-1 + \sqrt{1 - 4 x}\right)\right)^3 (1 - 4 x)^{\frac{5}{2}}}
\end{equation}
The generated function for $\ddot{Z}_N(1)$ can be computed using Eq. \ref{gen_ddot{Z}}
\begin{equation}
\mathcal{G}_{\ddot{Z}}(x) =
    \frac{x \left( 10 x^2 - 
    2x+ 3 - (3+4x)\sqrt{1 - 4 x})
    \right)}{(1 - 4 x)^{\frac{7}{2}}}
\end{equation}
Which allows though full series expansion to find $\ddot{Z}_N$ explicitly:
\begin{equation}
\ddot{Z}_N :=
    \frac{1}{3} 4^{(-1 + N)} \left(-3 n (1 + 2 N) + \frac{(6 + n (7 + 5 N)) \Gamma\left(\frac{1}{2} + n\right)}{\sqrt{\pi} \Gamma(N)}\right)
\end{equation}
For large $N$, this sequence has the asymptotic:
\begin{equation}
    \ddot{Z}_N \sim \frac{5 N^{\frac{5}{2}}}{3 \sqrt{\pi}} \cdot 4^{N-1}
\end{equation}

\section{Computing the derivatives of q-Catalan numbers at $q=1$ }
\label{Derivatives_C}
The first and second order derivatives at $q=1$ of q-Catalan are integer sequences that we denote $\dot{C}_N$ and $\ddot{C}_N$ respectively. We will compute their generating functions from which we will etract their exact expressions.
Denote the generating functions by $\mathcal{G}_{\dot{C}}$ and $\mathcal{G}_{\ddot{C}}$ respectively.
We have $\mathcal{G}_{\dot{C}}(x) := \partial_q f(1,x)$, with $f(q,x) := \Omega(q,1,x)$ the generating function of q-Catalan numbers, which satisfies:
\begin{equation}\label{q_alg}
    f(q,x) - 1 = x f(q,x)f(q,qx)
\end{equation}
Deriving with respect to q:
\begin{equation}
    \partial_q f(q,x) = x f(q,qx) \partial_q f(q,x)
    +
    x f(q,x)(\partial_q f(q,qx)
    +x \partial_x f(q,qx)  )
\end{equation}
At $q=1$, we know $f(1,x)$ which is simply the generating function of the conventional Catalan numbers, denoted $\mathcal{G}_{C}(x)$, and we can easily compute $\partial_x f(1,x)$
\begin{equation}
    \mathcal{G}_{C}(x) := f(1,x) ²=\frac {1-{\sqrt {1-4x}}}{2x}, \qquad
   \partial_x f(1,x) =
   \frac{-2x - \sqrt{1 - 4x} + 1}{2x^2\sqrt{1 - 4x}}
\end{equation}
Now the generating function $\mathcal{G}_{\dot{C}}(x)$ satisfies
\begin{equation}
    \mathcal{G}_{\dot{C}}(x) = 2 x \mathcal{G}_{C}(x) \mathcal{G}_{\dot{C}}(x)
    +
    x^2 \mathcal{G}_{C}(x) \partial_x f(1,x)
\end{equation}
Which yields
\begin{equation}
 \mathcal{G}_{\dot{C}}(x) = \frac{1 - \sqrt{1 - 4x} - 3x + x\sqrt{1 - 4x}}{2x - 8x^2}
\end{equation}
The first few terms of this generating function are:
\begin{equation}
    \mathcal{G}_{\dot{C}}(x) = x^2 + 7 x^3 + 37 x^4 + 176 x^5 + 794 x^6 + O(x^7)
    \label{generating_C_dot}
\end{equation}
This is the generating function of the sequence:
\begin{equation}
    \dot{C}_N = 2^{2N+1} - \binom{2N+3}{N+1} + \binom{2N+1}{N}
\end{equation}

\noindent
\paragraph*{Second order derivative:}
Derive the q-algebraic equation \ref{q_alg} to the second order with respect to $q$ and evaluate at $q=1$
\begin{equation}
\begin{split}
\mathcal{G}_{\ddot{C}}(x) = \partial_{qq} f(1,x) = x \Big( & 2 \partial_q f(1,x) \left(x \partial_x f(1,x) + \partial_q f(1,x)\right) \\
& + f(1,x) \left(x^2 \partial_{xx} f(1,x) + 2 \left(x \partial_{qx} f(1,x) + \partial_{qq} f(1,x)\right)\right) \Big)
\end{split}
\end{equation}
To find $\partial_{qq}f(1,x)$, we need, $\partial_{xx} f(1,x)$ and $\partial_{qx} f(1,x)$:
\begin{equation}
    \partial_{xx} f(1,x) = \frac{d^2 }{dx^2} \mathcal{G}_{C}(x) = -\frac{{1 - \sqrt{1 - 4x} + 2x \left(-3 + 2\sqrt{1 - 4x} + 3x\right)}}{{\left(1 - 4x\right)^{\frac{3}{2}}x^3}}
\end{equation}
\begin{equation}
\partial_{qx} f(1,x) = g'(x) = \frac{-1 + \sqrt{1 - 4x} + 2x\left(4 - 3\sqrt{1 - 4x} + (-6 + \sqrt{1 - 4x})x\right)}{2(1 - 4x)^2 x^2}
\end{equation}
Now we have all the ingredients to compute $\mathcal{G}_{\ddot{C}}(x)$:
\begin{equation}
   \mathcal{G}_{\ddot{C}}(x) = \frac{1 - \sqrt{1 - 4x} (1+2x) - 6 x^2}{(1 - 4x)^{\frac{5}{2}}} 
\end{equation}
It can be shown by computing an arbitrary order derivative that this is the generating function of the sequence:
\begin{equation}
    \ddot{C}_N = \frac{-1}{6} \cdot 2^{2N} \left( 6 + 9n + \frac{\left(6 + N(19 + 5N)\right) \Gamma\left(N + \frac{1}{2}\right)}{\sqrt{\pi} \cdot N!} \right)
\end{equation}
The asymptotic of this sequence for large $N$ is given by:
\begin{equation}
    \ddot{C}_N \sim \frac{5 \cdot 2^{2N} \cdot N^{\frac{3}{2}}}{6 \cdot \sqrt{\pi}}
\end{equation}

\end{appendices}

\bibliography{f}

\end{document}